\newcommand{\cref}[2][1]{{\textup{(\hyperref[#2]{\ref*{#2}$_{#1}$})}}}
\newcommand{\var}{\mathrm{var}}
\newcommand{\beq}{\begin{eqnarray*}}
	\newcommand{\eeq}{\end{eqnarray*}}
\newtheorem{thm}{Theorem}[section]
\newtheorem{lem}{Lemma}[section]
\newtheorem{assum}{Assumption}[section]
\numberwithin{equation}{section}
\theoremstyle{definition}
\newtheorem{exm}{Example}[section]
\newtheorem{remark}{Remark}[section]
\def\@biblabel#1{\hspace*{-\labelsep}}
\begin{document}
	
	\title{Standard Errors for Panel Data Models with Unknown Clusters}
	\date{\today }

	\author{
		Jushan Bai\thanks{%
			Address: 420 West 118th St. MC 3308, New York, NY 10027, USA. E-mail: \texttt{ jb3064@columbia.edu}.} \\ \footnotesize Columbia University
		\and Sung Hoon Choi\thanks{%
			Address: 75 Hamilton St., New Brunswick, NJ 08901, USA. E-mail:
			\texttt{sc1711@economics.rutgers.edu}.} \\  \footnotesize  Rutgers University  \and
		Yuan Liao\thanks{Address: 75 Hamilton St., New Brunswick, NJ 08901, USA. Email:
			\texttt{yuan.liao@rutgers.edu}.}\\  \footnotesize   Rutgers University
	}

	\maketitle
	
	\begin{abstract}
		This paper develops a new standard-error estimator for linear panel data models. The proposed estimator is robust to heteroskedasticity, serial correlation, and cross-sectional correlation of unknown forms.  The serial correlation is controlled by the Newey-West method. To control for cross-sectional correlations, we propose to use the thresholding method, without assuming the clusters to be known. We establish the consistency of the proposed estimator. Monte Carlo simulations show the method works well. An empirical application is considered.
	
\vspace{0.1in}	
		Keywords: Panel data, clustered standard errors, thresholding, cross-sectional correlation, serial correlation, heteroskedasticity
	\end{abstract}
	
	\thispagestyle{empty}
	
	

	\onehalfspacing

	\newpage
	\setcounter{page}{1}
	\pagenumbering{arabic}
	
	\section{Introduction} \label{intro}
	Consider a linear panel regression with fixed-effects: 
	\begin{equation*}
	y_{it} = x_{it}'\beta +\alpha_{i} + \mu_{t} + u_{it}, \label{e.1}
	\end{equation*}	
	where $\alpha_{i}$ and $\mu_{t}$ are individual fixed-effects and time fixed effects; $x_{it}$ is a $k \times1$ vector of explanatory variables; $u_{it}$ is an unobservable error component. The outcome variable  $y_{it}$ and fixed effects are scalars, and $\beta$ is a $k\times 1$ vector. 
	

This paper is about the standard error of the fixed-effect ordinary least squares (OLS).
	One of the commonly used standard errors for OLS in empirical research is the \cite{white1980heteroskedasticity} heteroskedasticity robust standard error in the cross-sectional setting. In the presence of serial and cross-sectional correlations, the conventional panel standard errors may be biased. \cite{newey1986simple} introduced heteroskedasticity and autocorrelation consistent (HAC) covariance matrix estimator for time series, which allows serial correlations (also see \cite{andrews1991}, \cite{newey1994}). The cluster standard errors suggested by \cite{arellano1987} are often reported in studies of the panel model. This estimator is robust to heteroskedasticity in the cross-section and also arbitrary serial correlation, but it focuses on the large-$N$ small-$T$ scenario.  The case of large-$N$ large-$T$ is then studied by \cite{ahn2014}, \cite{hansen2007}, among many others, while either cross-sectional or serial independence is required. \cite{hansen2007} examined the covariance estimator when the time series dependence is left unrestricted.  In addition, \cite{vogelsang2012} studied the asymptotic theory that is robust to heteroskedasticity, autocorrelation, and spatial correlation, which extended and generalized the asymptotic results of \cite{hansen2007} for the conventional cluster standard errors including time fixed effects.  \cite{stock2008} suggested a bias-adjusted heteroskedasticity-robust variance matrix estimator that handles serial correlations under any sequences of $N$ or $T$.  Also, see \cite{petersen2009} who used a simulation study to examine different types of standard errors, including the clustered, Fama-MacBeth, and the modified version of Newey-West standard errors for panel data. In general, on the other hand, the conventional cluster standard errors assume that individuals across clusters are independent. Also, the cluster structure should be known such as schools, villages, industries, or states. See \cite{arellano2003}, \cite{cameron2015} and \cite{greene2003}. However, the knowledge of clusters is not available in many applications.

	In a recent interesting paper, \cite{abadie2017should} argue that clustering is an issue more of  sampling design or experimental design. Clustered standard errors are not always necessary and researchers should be more thoughtful when applying them. One reason is that clustering  may result in an unnecessarily wider confidence interval.	
Clustered standard errors are derived from the modelling perspective (model implied variance matrix) and are widely practiced, see, for example, \cite{angrist2008mostly}, \cite{cameron2005microeconometrics}, and \cite{wooldridge2003cluster, wooldridge2010econometric}.
In this paper, we continue to take the modeling perspective.
	Because of our use of thresholding method, the resulting confidence interval is not necessarily much wider, even if all cross-sectional units are allowed to be correlated.
	Furthermore, the proposed approach is also applicable when the knowledge of clustering is not available.

	We provide  a robust standard error that allows  both serial and cross-sectional correlations. We do not impose parametric structures on the serial or cross-sectional correlations. We assume these correlations are weak and apply nonparametric methods to estimate the standard errors. To control for the autocorrelation in time series, we employ the  Newey-West   truncation. To control for the cross-sectional correlation, we assume sparsity for cross-section $(i,j)$ pairs, potentially resulting from the presence of cross-sectional clusters, but the knowledge on clustering (the number of clusters and the size of each cluster) is not assumed.   We then estimate them by applying the thresholding approach of \cite{bickel2008a}.
	We also show how to make use of information on clustering when available.
	In passing we point out that instead of robust standard errors, in a separate study, \cite{baichoiliao2019} proposed a feasible GLS (FGLS) method to take into account heteroskedasticity and both serial and cross-sectional correlations. The FGLS is more efficient than OLS.
	
	The methods  we employ in this paper, banding and thresholding, are regularization methods, and have been used extensively in the recent machine learning literature for estimating high-dimensional parameters.  Nonparametric machine learning techniques have been proved to be useful tools in econometric studies.
	
	The rest of the paper is organized as follows. In Section \ref{sec2}, we describe the models and standard errors as well as the asymptotic results of OLS. Monte Carlo studies evaluating the finite sample performance of the estimators are presented in Section \ref{sec3}.  Section \ref{sec4} illustrates our methods in an application of US divorce law reform effects. Conclusions are provided in Section \ref{sec5} and all  proofs are given in Appendix \ref{app}.
	
	Throughout this paper, $\nu_{min}(A)$ and $\nu_{max}(A)$ denote the minimum and maximum eigenvalues of matrix $A$. We use $\|A\| = \sqrt{\nu_{max}(A'A)}$, $\|A\|_{1} = max_{i}\sum_{j}|A_{ij}|$ and $\|A\|_{F} = \sqrt{tr(A'A)}$ as the operator norm, the $\ell_1$-norm and the Frobenius norm of a matrix A, respectively. Note that if $A$ is a vector, $\|A\|$ is the Euclidean norm, and $|a|$ is the Absolute-value norm of a scalar $a$.
	
	\section{OLS and Standard Error Estimation} \label{sec2}
	We consider the following model:
	\begin{equation} \label{e.2.1}
	y_{it} = x_{it}'\beta + u_{it},
	\end{equation}
	where $\beta$ is a $k \times 1$ vector of unknown coefficients, $x_{it}$ is a $k \times 1$ vector of regressors, and $u_{it}$ represents the error term, often known as the idiosyncratic component. This formulation incorporates the standard fixed effects models as in \cite{hansen2007}. For example, $x_{it}, y_{it}$ and $u_{it}$ can be interpreted as variables resulting from removing the nuisance parameters from the equation, such as first-differencing to remove the fixed effects. Indeed, it is straightforward to allow additive fixed effects by using the usual demean procedure.
	
	For a fixed $t$, model (\ref{e.2.1}) can be written as:
	\begin{equation} \label{e.2.2}
	y_{t} = x_{t}\beta + u_{t},
	\end{equation}
	where $y_{t} = (y_{1t}, ..., y_{Nt})'$ $(N \times 1)$, $x_{t} = (x_{1t}, ..., x_{Nt})'$ $(N \times k),$ and $u_{t} = (u_{1t}, ..., u_{Nt})'$ $(N \times 1)$. To economize notation, we define $y_{i} = (y_{i1}, ..., y_{iT})'$ $(T \times 1)$, $x_{i} = (x_{i1}, ..., x_{iT})'$ $(T \times k),$ and $u_{i} = (u_{i1}, ..., u_{iT})'$ $(T \times 1)$. So when the vector $y$ is indexed by $t$, it refers to an $N\times 1$ vector, and when $y$ is indexed by $i$ it refers to a $T\times 1$ vector. Similar meaning is applied to $x$ and $u$. There is no confusion when context is clear.

	The (pooled) ordinary least square (OLS) estimator of $\beta$ from equations (\ref{e.2.1}) and (\ref{e.2.2}) may   be defined as
	\begin{equation} \label{e.2.3}
	\widehat{\beta} = (\sum\limits_{i=1}^{N}\sum\limits_{t=1}^{T}x_{it}x_{it}')^{-1}\sum\limits_{i=1}^{N}\sum\limits_{t=1}^{T}x_{it}y_{it} = (\sum\limits_{t=1}^{T}x_{t}'x_{t})^{-1}\sum\limits_{t=1}^{T}x_{t}'y_{t}.
	\end{equation}
	
	The variance of $\widehat{\beta}$ depends on both $V_{X} \equiv \cfrac{1}{NT}\sum\limits_{i=1}^{N}\sum\limits_{t=1}^{T}x_{it}x_{it}'$, and particularly,  
	\begin{align} 
	V & \equiv  Var(\cfrac{1}{\sqrt{NT}}\sum\limits_{i=1}^{N}\sum\limits_{t=1}^{T}x_{it}u_{it})  \cr
	& =  \cfrac{1}{NT}\sum\limits_{t=1}^{T}Ex_{t}'u_{t}u_{t}'x_{t} +  \cfrac{1}{NT}\sum\limits_{h=1}^{T-1}\sum\limits_{t=h+1}^{T}[Ex_{t}'u_{t}u_{t-h}'x_{t-h}+Ex_{t-h}'u_{t-h}u_{t}'x_{t}].\label{e.2.4}
	\end{align}
	The goal of this paper is to consistently estimate $V$ in the presence of both serial and cross-sectional correlations in $\{u_{it}\}$.

	There are two types of clustered standard errors suggested by \cite{arellano1987}. The original individual clustered version is
	\begin{equation*} \label{e.9}
	\widehat{V}_{CX} = \frac{1}{NT}\sum_{i=1}^{N}x_{i}'\widehat{u}_{i}\widehat{u}_{i}'x_{i},
	\end{equation*}
	with $\widehat{u}_{i} = y_{i}-x_{i}\widehat{\beta} $ are the OLS residuals, and this estimator allows for arbitrary serial dependence and heteroskedasticity within individuals. In addition, $\widehat{V}_{CX}$ assumes no cross-section correlation.

	The time-clustered version, which allows for heteroskedasticity and arbitrary cross-sectional correlation, is
	\begin{equation*} \label{e.10}
	\widehat{V}_{CT} = \frac{1}{NT}\sum_{t=1}^{T}x_{t}'\widehat{u}_{t}\widehat{u}_{t}'x_{t},
	\end{equation*}
	with $\widehat{u}_{t} = y_{t}-x_{t}\widehat{\beta}$. Here $\widehat{V}_{CT}$ assumes no serial correlation. 
	
	The above clustered standard errors are robust to either  arbitrary serial correlation or arbitrary cross-sectional correlation, respectively. In practice, however, since the dependence assumption is unknown, an over-rejection problem may occur. Specifically, if there exist both serial and cross-sectional correlations, these estimators are not robust anymore, as our numerical evidence shows in Section \ref{sec3} (e.g., Tables \ref{bothcorr3} and \ref{error factor structure}).
	
	To control for the serial correlation, a simple modification of $\widehat{V}_{CT}$ using \cite{newey1986simple} is
	\begin{equation} \label{e.2.5}
	\widehat{V}_{DK} = \cfrac{1}{NT}\sum\limits_{t=1}^{T}x_{t}'\widehat{u}_{t}\widehat{u}_{t}'x_{t} +  \cfrac{1}{NT}\sum\limits_{h=1}^{L}\omega(h,L)\sum\limits_{t=h+1}^{T}[x_{t}'\widehat{u}_{t}\widehat{u}_{t-h}'x_{t-h}+x_{t-h}'\widehat{u}_{t-h}\widehat{u}_{t}'x_{t}],
	\end{equation}
	 where $\omega(\cdot)$ is the kernel function and $L$ is the bandwidth. This estimator is suggested by \cite{driscoll1998}. When $N$ is large, however, (\ref{e.2.5}) accumulates a large number of  cross-sectional estimation noises.

	More generally, let
	\begin{equation*} \label{e.12}
	V_{ij} \equiv  \cfrac{1}{T}\sum\limits_{t=1}^{T}Ex_{it}u_{it}u_{jt}x_{jt}' +  \cfrac{1}{T}\sum\limits_{h=1}^{T-1}\sum\limits_{t=h+1}^{T}[Ex_{it}u_{it}u_{j,t-h}x_{j,t-h}'+Ex_{i,t-h}u_{i,t-h}u_{jt}x_{jt}'].
	\end{equation*}
	Then  equation (\ref{e.2.4}) can be written as
	\begin{equation*}
	V = \cfrac{1}{N}\sum\limits_{ij}V_{ij}.
	\end{equation*}
	Unlike time series observations, cross-sectional observations have no natural ordering. They can be arranged in different orders. That is why cross-sectional correlation is more difficult to control.
	The usual cluster standard error makes the following assumption: let $C_1,...,C_G$  be disjoint subsets of $\{1,...,N\}$, so that they are \textit{known} clusters and that $V_{ij}=0$ when $i$ and $j$ belong to different clusters.  So $V$ can be expressed as
	$$
	V= \frac{1}{N}\sum_{g=1}^G\sum_{(i,j)\in C_g} V_{ij}.
	$$
	See \cite{liang1986}. Suppose the cardinality of each $C_g$ is small (this would be the case if the number of clusters $G$ is large) or grows slowly with $N$,  then we only need to estimate $\sum_{g=1}^G\sum_{(i,j)\in C_g}1$ number of $V_{ij}$'s, greatly reducing the number of pair-wise covariances. But as commented in the literature, this requires the knowledge of $C_1,...,C_G$, which in some applications, is not naturally available.

	\subsection{The estimator of $V$ with unknown clusters} \label{estimator}
	
	The key assumption  we make is that conditionally on $x_{t}, \{u_{it}\}$ is weakly correlated across both $t$ and $i$. Essentially, this means $ V_{ij}$ is zero or nearly so for most  pairs of $(i, j)$.  There is a partition $\{(i,j) : i,j \leq N\} = S_{s} \bigcup S_{l}$ so that
	\begin{align*} \label{e.14}
	S_{s} = \{(i, j) : \|Ex_{it}u_{it}u_{j,t+h}x_{j,t+h}'\| = 0 \, \forall h\},\\~
	S_{l} = \{(i, j) : \|Ex_{it}u_{it}u_{j,t+h}x_{j,t+h}'\| \neq 0 \, \exists h\},
	\end{align*}
	where the subscript ``$s$" indicates ``small", and ``$l$" indicates ``large".
	We assume that $(i,i) \in S_l$ for all $i\leq N$, and importantly,  most pairs $(i,j)$ belong to $S_s$.
	Yet, we do not need to know which elements belong to $S_{s}$ or $S_{l}$. Then
	\begin{equation*}
	V  = \frac{1}{N}\sum_{(i,j) \in S_l}V_{ij}.
	\end{equation*}
Furthermore, let $\omega(h, L) = 1-h/(L+1)$  be the Bartlett kernel.
	Also see \cite{andrews1991} for other kernel functions.
	As suggested by \cite{newey1986simple}, $V_{ij}$ can be approximated by
	\begin{equation*} \label{e.13}
	V_{u,ij} \equiv \cfrac{1}{T}\sum\limits_{t=1}^{T}Ex_{it}u_{it}u_{jt}x_{jt}'+\cfrac{1}{T}\sum\limits_{h=1}^{L}\omega(h,L)\sum\limits_{t=h+1}^{T}[Ex_{it}u_{it}u_{j,t-h}x_{j,t-h}' + Ex_{i,t-h}u_{i,t-h}u_{jt}x_{jt}'].
	\end{equation*}
	Then approximately,
	\begin{equation*} \label{e.15}
	V  \approx \frac{1}{N}\sum_{(i,j) \in S_l}V_{u,ij}.
	\end{equation*}
	The above approximation plays the fundamental role of our standard error estimator.   We estimate $V_{ij}$ using \cite{newey1986simple}, and estimate $S_l$ using the  cross-sectional thresholding.

 To apply \cite{newey1986simple},   we estimate $V_{u,ij}$ by
	\begin{equation*} \label{e.16}
	S_{u,ij} \equiv \cfrac{1}{T}\sum\limits_{t=1}^{T}x_{it}\widehat{u}_{it}\widehat{u}_{jt}x_{jt}'+\cfrac{1}{T}\sum\limits_{h=1}^{L}\omega(h,L)\sum\limits_{t=h+1}^{T}[x_{it}\widehat{u}_{it}\widehat{u}_{j,t-h}x_{j,t-h}' + x_{i,t-h}\widehat{u}_{i,t-h}\widehat{u}_{jt}x_{jt}'],
	\end{equation*}
	where $\widehat{u}_{it} = y_{it}-x_{it}'\widehat{\beta}$.
	For a predetermined threshold value $\lambda_{ij}$,
	we  approximate $S_l$ by
	\begin{equation*} \label{e.17}
	\widehat{S}_l = \{(i,j) : \|S_{u,ij}\| > \lambda_{ij}\}.
	\end{equation*}
	Hence, a ``matrix hard-thresholding" estimator of $V$ is
	\begin{equation*} \label{e.18}
	\widehat{V}_{\text{Hard}} \equiv \frac{1}{N}\sum_{(i,j) \in \widehat{S}_l\cup \{i=j\}} S_{u,ij}.
	\end{equation*} As for the threshold value, we specify
	\begin{equation*} \label{e.19}
	\lambda_{ij} = M \, \omega_{NT}\sqrt{\|S_{u,ii}\|\|S_{u,jj}\|}, \text{ where }  \omega_{NT} = L\sqrt{\frac{\log(LN)}{T}}
	\end{equation*}
	for a constant $M>0$.
	The converging sequence $\omega_{NT} \rightarrow 0$ is chosen to satisfy:
	\begin{equation*} \label{e.20}
	\max\limits_{i,j\leq N}\|S_{u,ij}-V_{u,ij}\| = O_{P}(\omega_{NT}).
	\end{equation*}
	In practice, the thresholding constant, $M$, can be chosen through multifold cross-validation, which is discussed in the next subsection. In addition, we can obtain $\widehat{V}_{DK}$ from $\widehat{V}_{\text{Hard}}$ by setting $M=0$.

	We also recommend a ``matrix soft-thresholding" estimator as follow:
	\begin{equation*} \label{e.21}
	\widehat{V}_{\text{Soft}} \equiv \frac{1}{N}\sum_{i,j} \widehat{S}_{u,ij},
	\end{equation*}
	where $\widehat{S}_{u,ij}$ is
	\begin{equation*} \label{e.22}
	\widehat{S}_{u,ij}=\begin{cases}S_{u,ij},&\text{if}\,\,i=j,\\
	A_{u,ij},&\text{if}\, \|S_{u,ij}\|>\lambda_{ij},  \text{ and } i\neq j,\\
	0,&\text{if}\, \|S_{u,ij}\|<\lambda_{ij},  \text{ and } i\neq j,
	\end{cases}
	\end{equation*}
	where the $(k,k')$'s element of $A_{u,ij}$ is ($sgn(x)$ denotes the sign function)
	\begin{equation*} \label{e.23}
	A_{u,ij,kk'}=\begin{cases}
	sgn(S_{u,ij,kk'})[|S_{u,ij,kk'}|-\eta_{ij,kk'}]_{+},&\text{if}\, |S_{u,ij,kk'}|>\eta_{ij,kk'}, \\
	0,&\text{if}\,|S_{u,ij,kk'}|<\eta_{ij,kk'},
	\end{cases}
	\end{equation*}
	for the threshold value
	\begin{equation*} \label{e.24}
	\eta_{ij,kk'} = M \, \omega_{NT}\sqrt{|S_{u,ii,kk'}||S_{u,jj,kk'}|}, \text{ where }  \omega_{NT} = L\sqrt{\frac{\log(LN)}{T}}
	\end{equation*}
	for some constant $M>0$.\\

	\begin{remark}
		The thresholding estimators for $V$ do not assume known cluster information (the number of clusters and the membership of clusters). The method can also be modified to take into account the clustering information when available,  and is particularly suitable when the number of clusters is small, and the size of each cluster is large. The modification is to apply the thresholding method within each cluster. The conventional clustered standard errors lose a lot of degrees of freedom when the size of cluster is too large (because each cluster is effectively treated as a ``single observation"), resulting in conservative confidence intervals.  See \cite{cameron2015}.   The thresholding  avoids this problem, while allowing correlations of unknown form within each cluster.
	\end{remark}
	
	\subsection{Choice of tuning parameters} \label{tunning}
	Our suggested estimators, $\widehat{V}_{\text{Hard}}$ and $\widehat{V}_{\text{Soft}}$, require the choice of tuning parameters $L$ and $M$, which are the bandwidth and the threshold constant respectively. To choose the bandwidth $L$, we recommend using $L = 4(T/100)^{2/9}$ as \cite{newey1994} suggested.
	
	In practice, $M$ can be chosen through multifold cross-validation. After obtaining the estimated residuals $\widehat{u}_{it}$ by OLS, we split the data into two subsets, denoted by $\{\widehat{u}_{it}\}_{t \in J_{1}}$ and $\{\widehat{u}_{it}\}_{t \in J_{2}}$;  let $T(J_{1})$ and $T(J_{2})$ be the sizes of $J_{1}$ and $J_{2}$, which are $T(J_{1})+ T(J_{2}) = T$ and $T(J_{1})\asymp T$. As suggested by \cite{bickel2008a}, we can set $T(J_{1})=T (1-\log(T)^{-1})$ and $T(J_{2})=T/\log(T)$; $J_1$ represents the training data set, and $J_2$ represents the validation data set.
	
	The procedure requires splitting the data multiple times, say $P$ times. At the $p$th split, we denote by $\widehat{V}^{p}$ the sample covariance matrix based on the validation set, defined by
\[ \widehat{V}^{p} = \frac{1}{N}\sum_{ij}S_{u,ij}^{p}, \]
with  $S_{u,ij}^{p}$ defined similarly to $S_{u,ij}$  using data on $J_2$.
 Let $\widehat{V}_{s}(M)$ be the thresholding estimator with threshold constant $M$ using the entire sample.
Then we choose the constant $M^{*}$ by minimizing a cross-validation objective function
	
	\begin{equation*}
	M^* = \arg \min_{0<M<M_0}\frac{1}{P}\sum_{p=1}^{P}\|\widehat{V}_{s}(M)-\widehat{V}^{p}\|_{F}^2,\quad s\in\{\text{Hard}, \text{Soft}\}
	\end{equation*}
	and the resulting estimator is $\widehat{V}_{s}(M^*)$. We use $L=4(T/100)^{2/9}$ for both $\widehat{V}_{s}(M)$ and $\widehat{V}^{p}$ and  find that setting $M_0=1$ works well.
	So the minimization is taken over $M\in (0,1)$ through a grid search.

The above procedure modifies that of \cite{bickel2008a}   in two aspects. One is to use the entire sample when computing $\widehat{V}_{s}$  instead of $J_1$. Since $T(J_{1})$ is close to $T$, this modification does not change the result much, but simplifies the computation.  The second modification is to use a consecutive block for the validation set because of time series, so that the serial correlation is not perturbed. Hence in view of the time series nature, we first divide the data into $P=\log(T)$ blocks with block length $T/\log(T)$.  Each $ J_{2} $ is taken as one of the $P$ blocks when computing $\widehat V^p$, similar to the K-fold cross validation. 
We have   conducted simulations of the cross-validation in the presence of both correlations, and the results show that this procedure performs well. For instance, the cross-validation tends to choose smaller $M$ as the cross-sectional correlation becomes stronger. Due to the page limit, however, those are not reported in this paper.

	\subsection{Consistency}
	Below we present assumptions under which $\widehat{V}$ (either $\widehat{V}_{\text{Hard}}$ or $ \widehat{V}_{\text{Soft}}$) consistently estimates $V$. We define
	\begin{equation*} \label{e.25}
	\alpha_{NT}(h) \equiv \sup\limits_{X}\max\limits_{t\leq T}[\|E(u_{t}u_{t-h}'|X)\|+\|E(u_{t-h}u_{t}'|X)\|]
	\end{equation*}
	and
	\begin{equation*} \label{e.26}
	\rho_{ij,h} \equiv \sup\limits_{X}\max\limits_{t\leq T}[|E(u_{it}u_{j,t-h}|X)|+|E(u_{i,t-h}u_{jt}|X)|],
	\end{equation*}
    where $X=\{x_{it}\}_{i\leq N,t\leq T}$. These coefficients give measures of autocovariances and cross-section covariances.

	\begin{assum}	\label{assumption1}
		(i) $E(u_{t}|x_{t})=0$.\\
		(ii) Let $\nu_1 \leq ... \leq \nu_k$ be the eigenvalues of $(\frac{1}{NT}\sum_{i=1}^{N}\sum_{t=1}^{T}Ex_{it}x_{it}')$. Then there exist constants $c_{1}, c_{2} > 0$ such that $c_1 < \nu_1\leq \cdots \leq \nu_k < c_2$.\\
	\end{assum}
	
	\begin{assum} \label{assumption2}
		(weak serial and cross-sectional dependence). \\
		(i) $\sum_{h=0}^{\infty}\alpha_{NT}(h) \leq C$ for some $C >0$. In addition,
	there exist $\kappa\in(0,1)$,  $C>0$ such that for all $T>0$, 		$$
		\sup\limits_{A\in \mathcal{F}_{-\infty}^0, B \in \mathcal{F}_{T}^{\infty}}|P(A)P(B)-P(AB)| < exp(-CT^{\kappa}),
		$$
		where $\mathcal{F}_{-\infty}^0$ and $\mathcal{F}_{T}^{\infty}$ denote the $\sigma$-algebras generated by $\{(x_{t},u_{t}) : t \leq 0\}$ and $\{(x_{t},u_{t}) : t \geq T\}$ respectively.

		(ii) For some $q \in [0,1),$ $\omega_{NT}^{1-q}\max_{i\leq N}\sum_{j=1}^{N}(\sum_{h=0}^{L}\rho_{ij,h})^q = o(1),$ where  $\omega_{NT} \equiv L \sqrt{\frac{\log(LN)}{T}}$. 
		
	\end{assum}
	
	Assumption \ref{assumption2} (i) is the standard alpha-mixing condition, adapted to the large-$N$ panel. 
			 Condition (ii) is new here. It requires weak cross-sectional correlations. It is similar to the ``approximate sparse assumption" in \cite{bickel2008a}. Note that we actually allow the presence of many ``small" but nonzero $\|Ex_{it}u_{it}u_{j,t+h}x_{j,t+h}'\|$. Clusters that have ``large" $\|Ex_{it}u_{it}u_{j,t+h}x_{j,t+h}'\|$ are unknown to us. Hence the appealing feature of our method is that we allow for unknown clusters. 
			 
			 Essentially the assumption $\omega_{N,T}^{1-q}\max_{i\leq N}\sum_{j \leq N}(\sum_{h=0}^{L}\rho_{ij,h})^{q} = o(1)$ controls the order of elements in $S_l$.
		The following example presents a case of cross sectional weak correlations that satisfies condition (ii). 
	\begin{exm}
	Suppose uniformly for all $h=0,..., L$,  $E(u_{t}u_{t-h}'|X)$ is an $N\times N$ block-diagonal matrix, where the  size of each block is at most $S_{NT}$, which practically means  that  each cluster contains no more than $S_{NT}$  individuals, assuming clusters are mutually uncorrelated.  Then $\rho_{ij,h}=0$ for $(i,j)$ belong to different blocks. Within the same block,  almost surely in $X$, 	$$
 |E(u_{i,t}u_{j,t-h}|X)| +	 |E(u_{i,t-h}u_{jt}|X)| \leq \alpha_{NT}(h),\quad \sum_{h=0}^{\infty}\alpha_{NT}(h)<\infty
	$$
	Then let $B(i)$ denote the block that $i$ belongs to, whose size is at most $S_{NT}$.
	\begin{eqnarray*}
	\omega_{NT}^{1-q}\max_{i\leq N}\sum_{j=1}^{N}(\sum_{h=0}^{L}\rho_{ij,h})^q &=&
	\omega_{NT}^{1-q}\max_{i\leq N}\sum_{j\in B(i)}(\sum_{h=0}^{L}\rho_{ij,h})^q \cr
	&\leq& C\omega_{NT}^{1-q} S_{NT}(\sum_{h=0}^{\infty}\alpha_{NT}(h))^{q/c}
	\end{eqnarray*}
for constants $c,C>0.$
	The last term converges to zero so long as $\omega_{NT}^{1-q} S_{NT}\to 0$. This then requires either fixed or slowly growing cluster size $S_{NT}$.  

	\end{exm}
	
	\begin{assum} \label{assumption3}
	
	(i) For each fixed $h$, $\omega(h, L) \rightarrow 1$ as $L \rightarrow \infty$ and $\max_{h\leq L} |\omega(h,L)| \leq C$ for some $C > 0.$
	\\
	(ii)   Exponential tail: There exist $r_{1}, r_{2}>0$ and $b_{1}, b_{2} > 0$, such that $ r_{1}^{-1}+r_{2}^{-1}+\kappa^{-1}>1$, and for any $s > 0, i\leq N$,  
		$$P(|u_{it}| > s) \leq exp(-(s/b_{1})^{r_1}),\quad P(|x_{it}|>s) \leq \exp(-(s/b_2)^{r_2}).$$
			(iii) There is $c_{1} > 0, $ for all $i,  \lambda_{\min}( \var(\frac{1}{\sqrt{T}}\sum_{t=1}^Tx_{it}u_{it})) > c_{1}.$  Additionally, 	the eigenvalues of $V$ and $V_{X}$ are bounded away from both zero and infinity.
 
	\end{assum}

	Condition (i) is well satisfied by various kernels for the HAC-type estimator.  Condition (ii)  ensures the Bernstein-type inequality for weakly dependent data.  Note that it requires the underlying distributions to be thin-tailed. Allowing for heavy-tailed distributions is also an important issue. However, it would require a very different estimation method, and is out of the scope of this paper. 
	Nevertheless, we have conducted simulation studies under heavy-tailed distributions (e.g., $t$-distribution with degree of freedom 5). Indeed, the proposed estimator works well in this case, even though the theory requires thin-tailed distributions.\footnote{The simulation results for the heavy-tailed distributions are available upon request from the authors.}

	We have the following main theorem and all proofs are contained in Appendix A1.\\~
	
	\begin{thm}\label{asdistribution}
	 Under Assumption \ref{assumption1}-\ref{assumption3}, as $N,T \rightarrow \infty$,
		\begin{equation*}
		\sqrt{NT}[V_{X}^{-1}\widehat{V}V_{X}^{-1}]^{-1/2}(\widehat{\beta}-\beta)\overset{d}{\to} \mathcal{N}(0,I).
		\end{equation*}
	\end{thm}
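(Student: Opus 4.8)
The plan is to set $V_X=\frac{1}{NT}\sum_{i,t}x_{it}x_{it}'$ and $\xi\equiv\frac{1}{\sqrt{NT}}\sum_{i,t}x_{it}u_{it}$, so that $\sqrt{NT}(\widehat\beta-\beta)=V_X^{-1}\xi$ exactly, and then reduce the statement to two ingredients: (a) a central limit theorem $V^{-1/2}\xi\overset{d}{\to}\mathcal N(0,I)$, and (b) consistency of the variance estimator, $\|\widehat V-V\|=o_P(1)$. Given these, Assumptions \ref{assumption1}(ii) and \ref{assumption3}(iii) keep the eigenvalues of $V$ and $V_X$ bounded away from $0$ and $\infty$, and a routine law of large numbers gives $\|V_X-\mathbb E V_X\|=o_P(1)$ under Assumptions \ref{assumption2}(i) and \ref{assumption3}(ii); hence the sandwich $V_X^{-1}\widehat V V_X^{-1}$ consistently estimates $V_X^{-1}V V_X^{-1}$, and a standard studentization argument combined with (a) delivers the conclusion. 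I would therefore concentrate on (a) and (b).

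For (a), I would write $\xi=\frac{1}{\sqrt T}\sum_{t=1}^{T}\zeta_t$ with $\zeta_t\equiv\frac{1}{\sqrt N}\sum_{i=1}^{N}x_{it}u_{it}$. Assumption \ref{assumption1}(i) gives $\mathbb E(\zeta_t\mid x_t)=0$; the array $\{\zeta_t\}$ inherits the exponential $\alpha$-mixing of Assumption \ref{assumption2}(i), and the exponential tails of Assumption \ref{assumption3}(ii) bound its moments uniformly. A central limit theorem for triangular arrays of $\alpha$-mixing sequences (via the large-block/small-block decomposition) then applies, with nondegeneracy of the limit furnished by the lower bound on $\lambda_{\min}(\var(\cdot))$ in Assumption \ref{assumption3}(iii). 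The only panel-specific wrinkle is that $N\to\infty$ simultaneously; this is harmless once $\zeta_t$ is treated as a generic mixing array with $V\equiv\var(\xi)$ in the role of the limiting variance.

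For (b) I would decompose
\[
\widehat V-V
=\underbrace{\frac1N\sum_{i,j}\bigl(\widehat S_{u,ij}^{\mathrm{keep}}-V_{u,ij}\bigr)}_{\mathrm{(I)}}
\;+\;\underbrace{\frac1N\sum_{(i,j)\in S_l}\bigl(V_{u,ij}-V_{ij}\bigr)}_{\mathrm{(II)}},
\]
where $\widehat S_{u,ij}^{\mathrm{keep}}$ denotes the thresholded summand actually retained in $\widehat V$ (for $\widehat V_{\text{Hard}}$: $S_{u,ij}$ when $i=j$ or $(i,j)\in\widehat S_l$, and $0$ otherwise; analogously for $\widehat V_{\text{Soft}}$). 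Term $\mathrm{(II)}$ is the Newey--West truncation bias: from Assumption \ref{assumption3}(i), $1-\omega(h,L)\to0$ for each $h$, from Assumption \ref{assumption2}(i), $\sum_{h>L}\alpha_{NT}(h)\to0$, and $\rho_{ij,h}\le\alpha_{NT}(h)$; together with $\frac1N|S_l|\le\max_i\sum_j\mathbf{1}\{(i,j)\in S_l\}=o(\omega_{NT}^{-1})$ from Assumption \ref{assumption2}(ii) at $q=0$, a careful accounting of the truncated lags gives $\mathrm{(II)}=o(1)$. For term $\mathrm{(I)}$, I would work on the event $\mathcal A\equiv\{\max_{i,j}\|S_{u,ij}-V_{u,ij}\|\le C\omega_{NT}\}$, shown below to hold with probability tending to $1$. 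On $\mathcal A$: the $N$ diagonal summands each contribute $O(\omega_{NT})$, hence $O(\omega_{NT})$ on average; for off-diagonal $(i,j)\in S_s$ one has $V_{u,ij}=0$ and $\|S_{u,ij}\|\le C\omega_{NT}<\lambda_{ij}$ once $M$ is large (using that $\|S_{u,ii}\|$ is bounded below by Assumption \ref{assumption3}(iii)), so these are correctly thresholded to $0$; and for the remaining pairs the standard thresholding bound $\|\widehat S_{u,ij}^{\mathrm{keep}}-V_{u,ij}\|\le C\min\{\omega_{NT},\sum_{h=0}^{L}\rho_{ij,h}\}\le C\,\omega_{NT}^{1-q}\bigl(\sum_{h=0}^{L}\rho_{ij,h}\bigr)^{q}$ yields $\|\mathrm{(I)}\|\le C\,\omega_{NT}^{1-q}\max_i\sum_j\bigl(\sum_{h=0}^{L}\rho_{ij,h}\bigr)^{q}=o(1)$ by Assumption \ref{assumption2}(ii). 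The soft-thresholding case follows the same way, entrywise.

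The main obstacle is establishing $\max_{i,j}\|S_{u,ij}-V_{u,ij}\|=O_P(\omega_{NT})$. First I would remove the generated-regressor effect: writing $\widehat u_{it}=u_{it}-x_{it}'(\widehat\beta-\beta)$ and expanding $S_{u,ij}$ into the infeasible version $S_{ij}$ (built from the true $u_{it}$) plus cross terms carrying factors of $\widehat\beta-\beta$, and using $\|\widehat\beta-\beta\|=O_P((NT)^{-1/2})$ together with the fact that the accompanying factors are $O_P(\mathrm{polylog}(N))$ uniformly in $(i,j)$ (exponential tails plus a union bound), these cross terms are $o_P(\omega_{NT})$. Second, for the infeasible part, $\mathbb E(S_{ij}\mid X)=V_{u,ij}$, and for each triple $(i,j,h)$ the centered average $\frac1T\sum_t\bigl(x_{it}u_{it}u_{j,t-h}x_{j,t-h}'-\mathbb E(\cdot\mid X)\bigr)$ concentrates through a Bernstein inequality for exponentially $\alpha$-mixing sequences of exponential-tailed summands (Assumptions \ref{assumption2}(i) and \ref{assumption3}(ii)); a union bound over the $O(N^2L)$ triples together with the sum of the $L$ boundedly-weighted lag terms (Assumption \ref{assumption3}(i)) produces exactly the rate $L\sqrt{\log(LN)/T}=\omega_{NT}$. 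The delicate points in this last step are controlling the tail of the fourfold product $x_{it}u_{it}u_{j,t-h}x_{j,t-h}'$ when invoking the Bernstein bound, and checking that the union-bound cost $\log(N^2L)$ is absorbed by the $\log(LN)$ already inside $\omega_{NT}$; both go through because $k$ is fixed and the tails are thin. Assembling $\mathrm{(I)}$, $\mathrm{(II)}$, and the central limit theorem then completes the proof.
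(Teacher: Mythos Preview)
Your proposal follows the same architecture as the paper: reduce to $\|\widehat V-V\|=o_P(1)$, split into the Newey--West truncation bias (your term $\mathrm{(II)}$, which is exactly $V_L-V$) and the thresholding error (your term $\mathrm{(I)}$, which is $\widehat V-V_L$), and control the latter via the uniform bound $\max_{i,j}\|S_{u,ij}-V_{u,ij}\|=O_P(\omega_{NT})$. Your treatment of $\mathrm{(I)}$ by splitting along the \emph{true} partition $S_s/S_l$ is a legitimate variant of the paper's split by $\{\widehat S_{u,ij}=0\}$ versus $\{\widehat S_{u,ij}\neq 0\}$; both arrive at $C\,\omega_{NT}^{1-q}\max_i\sum_j(\sum_{h=0}^L\rho_{ij,h})^q$. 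Your sketch of the uniform bound (remove the $\widehat\beta-\beta$ effect, then apply a Bernstein inequality for mixing arrays with a union bound over $O(N^2L)$ triples) matches Lemmas~A.2--A.3.

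There is, however, a genuine gap in your handling of $\mathrm{(II)}$. You bound it entrywise, $\|V_{u,ij}-V_{ij}\|\le b_{NT}$ with $b_{NT}=o(1)$, and then invoke $\frac1N|S_l|\le\max_i\sum_j\mathbf 1\{(i,j)\in S_l\}=o(\omega_{NT}^{-1})$ ``from Assumption~\ref{assumption2}(ii) at $q=0$.'' But that assumption is stated for \emph{some} fixed $q\in[0,1)$, not for $q=0$; you cannot choose $q$. Without this, the entrywise route yields only $\|\mathrm{(II)}\|\le (|S_l|/N)\,b_{NT}$, which need not vanish. The paper avoids the entrywise detour entirely (Lemma~A.1): since $\|Ex_t'u_tu_{t-h}'x_{t-h}\|\le \alpha_{NT}(h)\,E\|x_t\|\|x_{t-h}\|\le CN\alpha_{NT}(h)$, one gets directly
\[
\|V-V_L\|\le C\sum_{h>L}\alpha_{NT}(h)+C\sum_{h=1}^L|1-\omega(h,L)|\,\alpha_{NT}(h)=o(1),
\]
with no reference to $|S_l|$ at all. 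A minor slip elsewhere: $V_{u,ij}$ is defined via unconditional expectations, so the correct statement is $\mathbb E\,S_{ij}=V_{u,ij}$, not $\mathbb E(S_{ij}\mid X)=V_{u,ij}$.
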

	Theorem \ref{asdistribution} allows us to construct a $(1-\tau)\%$ confidence interval for $c'\beta$ for any given $c \in \mathbb{R}^{k}$. The standard error of $c'\hat \beta_{OLS}$ is
\[  \Big( \frac 1 {NT} c'(V_{X}^{-1}\widehat{V}V_{X}^{-1})c  \Big)^{1/2} \]
and the confidence interval for $c'\beta$ is
  $[c'\widehat{\beta} \pm Z_{\tau}\hat{\sigma}/\sqrt{NT}]$ where $Z_{\tau}$ is the $(1-\tau)\%$ quantile of standard normal distribution and $\hat{\sigma} =  (c'(V_{X}^{-1}\widehat{V}V_{X}^{-1})c)^{1/2}$.

	\section{Monte Carlo Experiments} \label{sec3}
	\subsection{DGP and methods} \label{dgp}
	In this section we examine the finite sample performance of the robust standard errors using simulation study.    The data generating process (DGP) used for the simulation is produced by the fixed effect linear regression model
	\begin{equation*}
	y_{it} =  \alpha_{i} + \mu_{t}+\beta_{0}x_{it} + u_{it},
	\end{equation*}
	where the true $\beta_{0} =1$. The DGP allows for serial and cross-sectional correlations in $x_{it}$ as follow:
	\begin{align*}
	&x_{it} = a_{i}\nu_{i+1,t}+ \nu_{i,t}+b_{i}\nu_{i-1,t}, \;\; \nu_{it}= \rho_{X} \nu_{i,t-1}+\epsilon_{it},\;\;
	\epsilon_{it} \sim N(0,1),\;\; \nu_{i0}=0,\\
	&\alpha_{i} \sim N(0,0.5), \;\; \mu_{t} \sim N(0,0.5),
	\end{align*}
	where the constants $\{a_{i}, b_{i}\}_{i=1}^{N}$ are i.i.d. Uniform$(0,\gamma_{X})$, which introduce cross-sectional correlation. In addition, $\nu_{it}$ is modeled as $AR(1)$ process with the autoregressive parameter $\rho_{X}$. Throughout this simulation study, we set $\rho_X = 0.3$ and $\gamma_{X}=1$.
	
	We generate the error terms, $u_{it}$, in three different cases as follow:
	\begin{align*}
	\text{Case 1:}\;\; & u_{it} = c_{i}m_{i+1,t}+ m_{i,t}+d_{i}m_{i-1,t}, \;\; m_{it}= \rho m_{i,t-1}+\varepsilon_{it},\;\; \varepsilon_{it} \sim N(0,1),\;\; m_{i0}=0,\\
	\text{Case 2:}\;\; & u_{it} = \psi\sum_{j=1}^{N}w_{ij}u_{it}+\eta_{it}, \;\; \eta_{it} \sim N(0,1), \;\; u_{i0} =0, \\
	\text{Case 3:}\;\; & u_{it} = \sum_{k=1}^{r}\lambda_{ij}F_{tk}+e_{it}, \;\; F_{tk}= \rho_{F}F_{t-1,k}+\xi_{tk}, \;\; \lambda_{ik}= \rho_{\lambda}\lambda_{i-1,k}+\zeta_{tj},\;\;\\
	&e_{it} \sim N(0,1), \;\; \xi_{it} \sim N(0,1), \;\; \zeta_{it} \sim N(0,1).	
	\end{align*}
	The regressor is uncorrelated with the error term $u_{it}$ each other. In Case 1, we generate the error term similar to $x_{it}$. The constants $\{c_{i}, d_{i}\}_{i=1}^{N}$ are  i.i.d. Uniform$(0,\gamma)$, which introduce cross-sectional correlation, and heteroskedasticity when  $\gamma>0$. $m_{it}$ is modeled as $AR(1)$ process with the autoregressive parameter $\rho$. Varying $\gamma>0$ allows us to control for the strength of the cross-sectional correlation. Data are generated with four different structures of regressors and  error terms: (a) no correlations  ($\rho = 0, \gamma = 0$); (b) only serial correlation ($\rho = 0.5, \gamma = 0$); (c) only cross-sectional correlation ($\rho = 0, \gamma = 1$); and (d) both serial and cross-sectional correlations ($\rho = \{0.3,0.9\}, \gamma = 1$). In Case 2, the error terms are modeled as a spatial autoregressive (SAR(1)) process. The matrix $W=(w_{ij})_{N\times N}$ is a rook type weight matrix whose diagonal elements are zero. Note that the rows of $W$ are standardized, hence they sum to one. $\psi$ is the scalar spatial autoregressive coefficient with $|\psi|<1$. In this paper, we report the case of $\psi=0.5$. Importantly, SAR(1) model does not produce the serial correlation on the error term.  In Case 3, we consider an error factor structure. Both factors and factor loadings follow AR(1) processes, which introduce both serial and cross-sectional correlations. We set $r=2$, and consider the cases of $\rho_{\lambda} = 0.3$ and $\rho_{F} = 0.9$. 
	
	In this simulation study, we examined $t$-statistics for testing the null hypothesis $H_{0}: \beta_{0}=1$ against the alternative $H_{1}:\beta_{0} \neq 1$. In each simulation we compare the proposed estimator with that of other common five types of standard errors for $\widehat{\beta}$: the standard White estimator given by $\widehat{V}_{White} = \frac{1}{NT}\sum_{i=1}^{N}\sum_{t=1}^{T}\widetilde{x}_{it}\widetilde{x}_{it}'\widehat{u}_{it}^2$, where $\tilde{x}_{it}$ is demeaned version of regressor. Two types of clustered standard errors, $\widehat{V}_{CX}$ and $\widehat{V}_{CT}$, as defined in Section 2. In addition, we use two types of Newey and West HAC estimators for the panel version as follows: \\
	$$\widehat{V}_{DK} = \frac{1}{NT}\sum_{t=1}^{T}\widetilde{x}_{t}'\widehat{u}_{t}\widehat{u}_{t}'\widetilde{x}_{t} +  \frac{1}{NT}\sum_{h=1}^{L}\omega(h,L)\sum_{t=h+1}^{T}[\widetilde{x}_{t}'\widehat{u}_{t}\widehat{u}_{t-h}'\widetilde{x}_{t-h}+\widetilde{x}_{t-h}'\widehat{u}_{t-h}\widehat{u}_{t}'\widetilde{x}_{t}]$$ and
	$$\widehat{V}_{HAC} = \frac{1}{NT}\sum_{i=1}^{N}\sum_{t=1}^{T}\widetilde{x}_{it}\widetilde{x}_{it}'\widehat{u}_{it}^2 + \frac{1}{NT}\sum_{i=1}^{N}\sum_{h=1}^{L}\omega(h,L)\sum_{t=h+1}^{T}[\widetilde{x}_{it}\widehat{u}_{it}\widehat{u}_{i,t-h}\widetilde{x}_{i,t-h}'+\widetilde{x}_{i,t-h}\widehat{u}_{i,t-h}\widehat{u}_{it}\widetilde{x}_{it}']. $$
	Note that $\widehat{V}_{HAC}$ assumes cross-sectional independence, while $\widehat{V}_{DK}$ allows arbitrary cross-sectional dependence. In addition, $\widehat{V}_{DK}$ and $\widehat{V}_{HAC}$ can be obtained from our proposed estimator with $M=0$ and a large constant $M$, respectively.
	
	Results are given for sample sizes $N = 50, 200$ and $T = 100, 200$. For each $\{N, T\}$ combination, we set $L=3,7,11$ as the bandwidth for $\widehat{V}_{HAC}$, $\widehat{V}_{DK}$, and the proposed estimator, $\widehat{V}_{\text{Hard}}$. We also use Bartlett kernel for these three estimators. For the thresholding constant parameters of $\widehat{V}_{\text{Hard}}$, we set $M = 0.10, 0.15, 0.20, 0.25$ in all cases. The simulation is replicated for one thousand times for each case and the nominal significance level is 0.05. Simulation results are reported in Tables \ref{no_cross} - \ref{error factor structure}.

	\subsection{Results}
	Tables \ref{no_cross} - \ref{error factor structure} present the simulation results, where each table corresponds to different cases. Each table presents results of null rejection probabilities for 5\% level tests based on six different standard errors. As expected, a common feature in all tables is that when both $N$ and $T$ are small, all six estimators have rejection probabilities greater than 0.05. This might happen even when the errors are drawn from i.i.d. standard normal, and this problem becomes more noticeable in the presence of serial, cross-sectional, or both correlations. A number of interesting findings based on  tables are summarized below.
	
	Tables \ref{no_cross}-\ref{bothcorr3} shows the results of Case 1. In Table \ref{no_cross}, Panel A indicates that all the estimators perform well due to no correlation. Especially, White standard error estimators give rejection probabilities close to 0.05. In Panel B, when the serial correlation is introduced, the performances of $\widehat{V}_{CX}$ and $\widehat{V}_{HAC}$ are markedly better than others except for small sample size. In addition, our proposed estimator, $\widehat{V}_{\text{Hard}}$, also performs well when we use both larger threshold constant $M$ and bandwidth $L$. Since there is only a serial correlation in the error term, these estimators take this correlation into account and perform well. As the size of bandwidth increases, the standard error estimated by $\widehat{V}_{HAC}$ increases to a level similar to the results of $\widehat{V}_{CX}$ and the tendency to over-reject diminishes. Since the Newey-West technique gives the weight, which is less than one, the estimated standard error may be underestimated. Hence, the traditional cluster standard error, $\widehat{V}_{CX}$, dominates the standard error of Newey-West panel version, $\widehat{V}_{HAC}$. Note that the unreported rejection probabilities of $\widehat{V}_{DK}$ exponentially increases as the bandwidth $L$ increases.
	
	Table \ref{crosssectional} considers the case of cross-sectionally correlated errors and regressors. In Panel A, except the case of small sample size, $\widehat{V}_{CT}$ and $\widehat{V}_{DK}$ with small bandwidth $L$ have rejection probabilities close to 0.05 in the first panel. Also, $\widehat{V}_{\text{Hard}}$ with small $L$ and $M$ performs well. Importantly, notice that the rejection rate of $\widehat{V}_{DK}$ and $\widehat{V}_{\text{Hard}}$ tend to over-reject substantially as the lag length $L$ increases. In addition, as the cross-section size $N$ increases, the over-rejection problem becomes worse, as we mentioned in Section \ref{sec2}. This tendency is easy to explain. Since $\widehat{V}_{DK}$ is an estimator based on a single time series and it is zero when full weight is given to the sample autocovariance, the bias in $\widehat{V}_{DK}$ initially falls but then increases as the lag length increases, while the variance of $\widehat{V}_{DK}$ is initially increasing but eventually becomes decreasing. Hence, $\widehat{V}_{DK}$ is biased downward substantially, and its t-statistics tends to over-reject when a large bandwidth is used. On the other hand, in the case of the small size of $L$ and $M$, $\widehat{V}_{\text{Hard}}$ gives less bias on the estimated standard error.
	
	Panel B of Table \ref{crosssectional} allows the serial correlation as well as the cross-sectional correlation. Not surprisingly, all estimators except $\widehat{V}_{\text{Hard}}$ and $\widehat{V}_{DK}$ tend to over-reject substantially. In the small sample, these two estimators get worse than the case of the first panel. In the large sample, however, rejection probabilities of $\widehat{V}_{\text{Hard}}$ and $\widehat{V}_{DK}$ are close to 0.05. Importantly, $\widehat{V}_{\text{Hard}}$ outperforms $\widehat{V}_{DK}$ by choosing $M$ properly. Unreported results of $\widehat{V}_{DK}$ with larger bandwidth, $L$, show much larger rejection probabilities than that of $\widehat{V}_{\text{Hard}}$.  This indicates that we can obtain unbiased standard error estimator and appropriate rejection rates using our proposed estimators,  $\widehat{V}_{\text{Hard}}$. 
	Table \ref{bothcorr3} is the result of strong serial correlation with the cross-sectional dependence. When the serial correlation gets stronger, such as $\rho=0.9$, all estimators tend to over-reject exponentially in small samples. However, $\widehat{V}_{\text{Hard}}$ and $\widehat{V}_{DK}$ outperform other estimators as the dimensionality increases. 
	
	Table \ref{SAR} considers the error with SAR(1) structure, which does not require the serial correlation on the error term. Similar to the results reported in the first panel of Table \ref{crosssectional}, $\widehat{V}_{CT}$ gives rejection probabilities close to 0.05. $\widehat{V}_{DK}$ and $\widehat{V}_{\text{Hard}}$ with small bandwidth $L$ also perform well. Moreover, $\widehat{V}_{\text{Hard}}$ with proper thresholding constant $M$ gives less bias than $\widehat{V}_{DK}$ on the estimated standard error.

	Finally, Table \ref{error factor structure} presents the results of the error factor structure.  Similar to the results of Table \ref{bothcorr3}, all estimators except $\widehat{V}_{\text{Hard}}$ and $\widehat{V}_{DK}$ tend to over-reject. Rejection probabilities of $\widehat{V}_{\text{Hard}}$ and $\widehat{V}_{DK}$ are relatively close to 0.05 when the sample size is large. 

	\section{Empirical study: Effects of divorce law reforms} \label{sec4}
	In this section, we re-examine the empirical work of the association between divorce law reforms and divorce rates using our proposed OLS standard error. There are many empirical studies on the effects of divorce law reforms on divorce rates. \cite{friedberg1998} found that state law reforms significantly increased divorce rates with controls for state and year fixed effects. 
	\cite{wolfers2006did} investigated the question of whether law reform continues to have an impact on the divorce rate by including dummy variables for the first two years after the reforms, 3-4 years, 5-6 years, and so on. Specifically, he studied the following fixed effect panel data model
	\begin{equation}
	y_{it} =  \alpha_{i} + \mu_{t} +  \sum_{k=1}^{8}\beta_{k}X_{it,k} + \delta_{i}t + u_{it},
	\end{equation}
	where $y_{it}$ is the divorce rate for state $i$ and year $t$; $\alpha_{i}$ and $\mu_{t}$ are the state and year fixed effects; $X_{it,k}$ is a binary regressor that representing the treatment effect $2k$ years after the reform; $\delta_{i}t$ a linear time trend. 
	\cite{wolfers2006did} suggested that there might be two sides of the same treatment yield this phenomenon: a number of divorces gradually shifted after the earlier dissolution of bad matches, after the reform.
	
	Both \cite{friedberg1998} and \cite{wolfers2006did} estimated OLS regressions using state population weight for each year. In addition, they estimated standard errors under the assumption that errors are homoskedastic, serially and cross-sectionally uncorrelated. However, ignoring these correlations might lead to bias in the standard error estimators. We re-estimated the model of \cite{wolfers2006did} using proposed OLS standard error estimators.
	
	The same data as in \cite{wolfers2006did} are used, but we exclude Indiana, New Mexico and Louisiana due to missing observations around divorce law reforms. As a result, we obtain a balanced panel data contain the divorce rates, state-level reform years and binary regressors from 1956 to 1988 over 48 states. We fit  models both with and without linear time trend, and also calculate our standard errors, as well as OLS, White, cluster and HAC standard errors. We set lag choices $L=3$ for HAC and our standard errors as suggested by \cite{newey1994} ($L=4(T/100)^{2/9}$). The threshold values $M$ chosen by the cross-validation method is $M=0.2$ for the model without state-specific linear trends, and $M=0.1$ with state-specific linear trends. These $M$ values are relatively small, implying the existence of cross-sectional correlations.    The estimated $\beta_{1}, \cdots, \beta_{8}$ with and without linear time trend and their different types of standard errors are presented respectively in Table \ref{div} below. Note that robust standard errors are not necessarily larger than the usual OLS standard errors, as shown in columns corresponding to $se_{CT}$, $se_{DK}$ and $se_{Hard}$.
	
	In Table \ref{div}, OLS estimates with and without  linear time trend are similar to each other. These estimates are also closely comparable to the results obtained in \cite{wolfers2006did}. The OLS estimates indicate that divorce rates rose soon after the law reform. However, within a decade, divorce rates had fallen over time. Most of the coefficient estimates are statistically significant at the $5\%$ level using usual OLS standard errors. According to the cluster standard errors, however, the only significant estimates are 11-15+ after the reform in the model without linear time trend. We use our method of correcting standard error estimates for heteroskedasticity, serial correlation and also cross-sectional correlation. In the model without linear trend, the estimates for 3-4 and 7-15+ are significant. On the other hand, the estimates for 1-4 are significant when linear trend is added.
Our estimated standard errors are close to those  of $se_{CT}$ and $se_{DK}$, which allow arbitrary cross-section correlations. The result indicates non-negligible cross-sectional correlations.
The result is also consistent with \cite{kim2014}, who used the interactive fixed effects approach. The latter approach is suitable for models with strong cross-sectional correlations.

	\section{Conclusions} \label{sec5}
	This paper studies the standard error problem for the OLS estimator in linear panel models, and proposes a new standard-error estimator that is robust to heteroskedasticity, serial and cross-sectional correlations when clusters are unknown. 
	Simulated experiments demonstrate the robustness of the new standard-error estimator to various correlation structures.

	\begin{table}[h!tbp]
		\centering
		\caption{Null rejection probabilities, 5\% level. Two-tailed test of $H_0 : \beta=1$.
			Case 1: No cross-sectional correlation ($\gamma=0$).}\label{no_cross}
		\begin{tabular}{lllrrrrrrrrrr}
			\toprule
			&& & \multicolumn{4}{c}{$\widehat{V}_{\text{Hard}}$} & & $\widehat{V}_{HAC}$ &  $\widehat{V}_{DK}$ & $\widehat{V}_{CX}$ & $\widehat{V}_{CT}$ & $\widehat{V}_{W}$   \\
			\cline{4-7}
			N  &   T   &    L$\backslash$ M   &0.10       & 0.15      & 0.20      &   0.25    &&&&& \\
			\midrule
			&&& \multicolumn{10}{c}{A. No serial correlation: $\rho=0$} \\ \cline{4-13}
			\noalign{\vskip 2mm}
			50	&	100	&	3	&	.067	&	.065	&	.065	&	.067	&&	.057	&	.068	&	.059	&	.058	&	.054	\\
			&		&	7	&	.070	&	.066	&	.070	&	.062	&&	.058	&	.073	&	.059	&	.058	&	.054	\\
			&		&	11	&	.082	&	.071	&	.056	&	.055	&&	.057	&	.088	&	.059	&	.058	&	.054	\\			
			\noalign{\vskip 2mm}
			50	&	200	&	3	&	.054	&	.053	&	.053	&	.051	&&	.046	&	.053	&	.057	&	.044	&	.047	\\
			&		&	7	&	.055	&	.054	&	.056	&	.053	&&	.047	&	.056	&	.057	&	.044	&	.047	\\
			&		&	11	&	.056	&	.054	&	.051	&	.047	&&	.047	&	.061	&	.057	&	.044	&	.047	\\			
			\noalign{\vskip 2mm}
			200	&	100	&	3	&	.062	&	.065	&	.059	&	.060	&&	.047	&	.065	&	.051	&	.055	&	.047	\\
			&		&	7	&	.071	&	.066	&	.057	&	.050	&&	.047	&	.075	&	.051	&	.055	&	.047	\\
			&		&	11	&	.079	&	.065	&	.057	&	.047	&&	.047	&	.091	&	.051	&	.055	&	.047	\\		
			\noalign{\vskip 2mm}
			200	&	200	&	3	&	.051	&	.051	&	.053	&	.052	&&	.048	&	.051	&	.051	&	.051	&	.048	\\
			&		&	7	&	.057	&	.055	&	.055	&	.054	&&	.047	&	.057	&	.051	&	.051	&	.048	\\
			&		&	11	&	.058	&	.052	&	.049	&	.046	&&	.047	&	.060	&	.051	&	.051	&	.048	\\ 
			\midrule
			&&& \multicolumn{10}{c}{B. Serial correlation: $\rho=0.5$} \\ \cline{4-13}
			\noalign{\vskip 2mm}			
			50	&	100	&	3	&	.077	&	.078	&	.081	&	.078	&&	.070	&	.078	&	.065	&	.104	&	.104	\\
			&		&	7	&	.085	&	.084	&	.078	&	.076	&&	.068	&	.083	&	.065	&	.104	&	.104	\\
			&		&	11	&	.086	&	.082	&	.070	&	.066	&&	.067	&	.091	&	.065	&	.104	&	.104	\\		
			\noalign{\vskip 2mm}
			50	&	200	&	3	&	.070	&	.071	&	.075	&	.074	&&	.069	&	.071	&	.067	&	.096	&	.100	\\
			&		&	7	&	.073	&	.070	&	.068	&	.067	&&	.063	&	.072	&	.067	&	.096	&	.100	\\
			&		&	11	&	.077	&	.074	&	.065	&	.064	&&	.061	&	.072	&	.067	&	.096	&	.100	\\		
			\noalign{\vskip 2mm}
			200	&	100	&	3	&	.078	&	.080	&	.080	&	.077	&&	.065	&	.080	&	.053	&	.103	&	.094	\\
			&		&	7	&	.083	&	.078	&	.072	&	.061	&&	.057	&	.082	&	.053	&	.103	&	.094	\\
			&		&	11	&	.087	&	.071	&	.059	&	.058	&&	.055	&	.105	&	.053	&	.103	&	.094	\\
			\noalign{\vskip 2mm}
			200	&	200	&	3	&	.057	&	.056	&	.055	&	.056	&&	.052	&	.057	&	.045	&	.085	&	.082	\\
			&		&	7	&	.059	&	.054	&	.053	&	.051	&&	.048	&	.064	&	.045	&	.085	&	.082	\\
			&		&	11	&	.064	&	.057	&	.054	&	.047	&&	.047	&	.067	&	.045	&	.085	&	.082	\\
			\bottomrule
		\end{tabular}
	\end{table}

	\newpage

	\begin{table}[h!tbp]
		\centering
		\caption{Null rejection probabilities, 5\% level. Two-tailed test of $H_0 : \beta=1$.
			Case 1: Cross-sectional correlation ($\gamma=1$).}\label{crosssectional}
		\begin{tabular}{lllrrrrrrrrrr}
			\toprule
			&& & \multicolumn{4}{c}{$\widehat{V}_{\text{Hard}}$} & & $\widehat{V}_{HAC}$ &  $\widehat{V}_{DK}$ & $\widehat{V}_{CX}$ & $\widehat{V}_{CT}$ & $\widehat{V}_{W}$   \\
			\cline{4-7}
			N  &   T   &    L$\backslash$ M   &0.10       & 0.15      & 0.20      &   0.25    &&&&& \\
			\midrule
			&&& \multicolumn{10}{c}{A. No serial correlation: $\rho=0$} \\ \cline{4-13}	
			\noalign{\vskip 2mm}					
			50	&	100	&	3	&	.054	&	.054	&	.055	&	.055	&&	.142	&	.055	&	.152	&	.054	&	.140	\\
			&		&	7	&	.066	&	.064	&	.063	&	.069	&&	.141	&	.068	&	.152	&	.054	&	.140	\\
			&		&	11	&	.078	&	.077	&	.082	&	.109	&&	.145	&	.079	&	.152	&	.054	&	.140	\\
			\noalign{\vskip 2mm}
			50	&	200	&	3	&	.046	&	.049	&	.046	&	.047	&&	.133	&	.049	&	.145	&	.043	&	.132	\\
			&		&	7	&	.054	&	.055	&	.060	&	.060	&&	.134	&	.052	&	.145	&	.043	&	.132	\\
			&		&	11	&	.059	&	.062	&	.063	&	.073	&&	.135	&	.058	&	.145	&	.043	&	.132	\\
			\noalign{\vskip 2mm}
			200	&	100	&	3	&	.060	&	.060	&	.060	&	.064	&&	.148	&	.058	&	.150	&	.053	&	.148	\\
			&		&	7	&	.069	&	.073	&	.075	&	.080	&&	.149	&	.067	&	.150	&	.053	&	.148	\\
			&		&	11	&	.086	&	.085	&	.096	&	.126	&&	.151	&	.084	&	.150	&	.053	&	.148	\\
			\noalign{\vskip 2mm}
			200	&	200	&	3	&	.050	&	.051	&	.051	&	.050	&&	.121	&	.050	&	.128	&	.049	&	.121	\\
			&		&	7	&	.057	&	.058	&	.057	&	.057	&&	.122	&	.058	&	.128	&	.049	&	.121	\\
			&		&	11	&	.063	&	.062	&	.064	&	.079	&&	.123	&	.062	&	.128	&	.049	&	.121	\\
			\midrule
			&&& \multicolumn{10}{c}{B. Serial correlation: $\rho=0.3$} \\ \cline{4-13}			
			\noalign{\vskip 2mm}			
			50	&	100	&	3	&	.070	&	.069	&	.069	&	.067	&&	.150	&	.069	&	.155	&	.074	&	.176	\\
			&		&	7	&	.074	&	.075	&	.073	&	.078	&&	.150	&	.077	&	.155	&	.074	&	.176	\\
			&		&	11	&	.083	&	.079	&	.093	&	.108	&&	.150	&	.085	&	.155	&	.074	&	.176	\\
			\noalign{\vskip 2mm}
			50	&	200	&	3	&	.058	&	.058	&	.058	&	.058	&&	.150	&	.058	&	.146	&	.069	&	.171	\\
			&		&	7	&	.055	&	.060	&	.062	&	.061	&&	.142	&	.056	&	.146	&	.069	&	.171	\\
			&		&	11	&	.059	&	.063	&	.071	&	.082	&&	.142	&	.060	&	.146	&	.069	&	.171	\\		
			\noalign{\vskip 2mm}
			200	&	100	&	3	&	.078	&	.076	&	.077	&	.072	&&	.162	&	.080	&	.157	&	.091	&	.185	\\
			&		&	7	&	.083	&	.087	&	.083	&	.084	&&	.160	&	.086	&	.157	&	.091	&	.185	\\
			&		&	11	&	.097	&	.089	&	.103	&	.133	&&	.159	&	.101	&	.157	&	.091	&	.185	\\
			\noalign{\vskip 2mm}
			200	&	200	&	3	&	.055	&	.055	&	.054	&	.056	&&	.132	&	.056	&	.133	&	.068	&	.157	\\
			&		&	7	&	.053	&	.051	&	.056	&	.057	&&	.130	&	.057	&	.133	&	.068	&	.157	\\
			&		&	11	&	.057	&	.059	&	.065	&	.078	&&	.130	&	.061	&	.133	&	.068	&	.157	\\			
			\bottomrule
		\end{tabular}
	\end{table}

	\begin{table}[h!tbp]
		\centering
		\caption{Null rejection probabilities, 5\% level. Two-tailed test of $H_0 : \beta=1$.
			Case 1: Both strong serial and cross-sectional correlations ($\rho=0.9, \gamma=1$).}\label{bothcorr3}
		\begin{tabular}{lllrrrrrrrrrr}
			\toprule
			&& & \multicolumn{4}{c}{$\widehat{V}_{\text{Hard}}$} & & $\widehat{V}_{HAC}$ &  $\widehat{V}_{DK}$ & $\widehat{V}_{CX}$ & $\widehat{V}_{CT}$ & $\widehat{V}_{W}$   \\
			\cline{4-7}
			N  &   T   &    L$\backslash$ M   &0.10       & 0.15      & 0.20      &   0.25    &&&&& &\\
			\midrule
			50	&	100	&	3	&	.096	&	.097	&	.098	&	.098	&&	.180	&	.098	&	.168	&	.145	&	.271	\\
			&		&	7	&	.101	&	.101	&	.100	&	.101	&&	.173	&	.102	&	.168	&	.145	&	.271	\\
			&		&	11	&	.113	&	.101	&	.110	&	.128	&&	.174	&	.113	&	.168	&	.145	&	.271	\\
			\noalign{\vskip 2mm}
			50	&	200	&	3	&	.091	&	.092	&	.092	&	.092	&&	.194	&	.092	&	.180	&	.157	&	.286	\\
			&		&	7	&	.088	&	.087	&	.090	&	.093	&&	.185	&	.087	&	.180	&	.157	&	.286	\\
			&		&	11	&	.092	&	.092	&	.099	&	.114	&&	.182	&	.092	&	.180	&	.157	&	.286	\\
			\noalign{\vskip 2mm}
			200	&	100	&	3	&	.089	&	.087	&	.087	&	.086	&&	.193	&	.089	&	.146	&	.144	&	.256	\\
			&		&	7	&	.092	&	.089	&	.095	&	.097	&&	.178	&	.097	&	.146	&	.144	&	.256	\\
			&		&	11	&	.100	&	.096	&	.109	&	.128	&&	.173	&	.109	&	.146	&	.144	&	.256	\\
			\noalign{\vskip 2mm}
			200	&	200	&	3	&	.069	&	.069	&	.069	&	.067	&&	.146	&	.068	&	.125	&	.121	&	.226	\\
			&		&	7	&	.066	&	.068	&	.069	&	.067	&&	.136	&	.069	&	.125	&	.121	&	.226	\\
			&		&	11	&	.072	&	.071	&	.076	&	.087	&&	.133	&	.072	&	.125	&	.121	&	.226	\\
			\bottomrule
		\end{tabular}
	\end{table}

	\begin{table}[h!tbp]
		\centering
		\caption{Null rejection probabilities, 5\% level. Two-tailed test of $H_0 : \beta=1$.
			Case 2: Errors with Spatial AR(1) structure  ($ \psi=0.5$).}\label{SAR}
		\begin{tabular}{lllrrrrrrrrrr}
			\toprule
			&& & \multicolumn{4}{c}{$\widehat{V}_{\text{Hard}}$} & & $\widehat{V}_{HAC}$ &  $\widehat{V}_{DK}$ & $\widehat{V}_{CX}$ & $\widehat{V}_{CT}$ & $\widehat{V}_{W}$   \\
			\cline{4-7}
			N  &   T   &    L$\backslash$ M   &0.10       & 0.15      & 0.20      &   0.25    &&&&& &\\
			\midrule
			50	&	100	&	3	&	.061	&	.060	&	.062	&	.057	&&	.124	&	.059	&	.143	&	.053	&	.125	\\
				&		&	7	&	.068	&	.068	&	.073	&	.077	&&	.125	&	.067	&	.143	&	.053	&	.125	\\
				&		&	11	&	.086	&	.083	&	.089	&	.110	&&	.128	&	.085	&	.143	&	.053	&	.125	\\
			\noalign{\vskip 2mm}
			50	&	200	&	3	&	.046	&	.046	&	.047	&	.047	&&	.113	&	.046	&	.130	&	.043	&	.111	\\
				&		&	7	&	.052	&	.053	&	.053	&	.059	&&	.114	&	.048	&	.130	&	.043	&	.111	\\
				&		&	11	&	.052	&	.059	&	.065	&	.083	&&	.112	&	.061	&	.130	&	.043	&	.111	\\			
			\noalign{\vskip 2mm}
			200	&	100	&	3	&	.061	&	.057	&	.058	&	.057	&&	.123	&	.062	&	.120	&	.051	&	.122	\\
				&		&	7	&	.070	&	.071	&	.070	&	.081	&&	.122	&	.068	&	.120	&	.051	&	.122	\\
				&		&	11	&	.082	&	.080	&	.101	&	.117	&&	.121	&	.088	&	.120	&	.051	&	.122	\\
			\noalign{\vskip 2mm}
			200	&	200	&	3	&	.055	&	.055	&	.053	&	.050	&&	.124	&	.056	&	.125	&	.049	&	.123	\\
				&		&	7	&	.064	&	.061	&	.062	&	.064	&&	.123	&	.064	&	.125	&	.049	&	.123	\\
				&		&	11	&	.069	&	.070	&	.083	&	.099	&&	.123	&	.068	&	.125	&	.049	&	.123	\\
			\bottomrule
		\end{tabular}
	\end{table}

	\begin{table}[h!tbp]
		\centering
		\caption{Null rejection probabilities, 5\% level. Two-tailed test of $H_0 : \beta=1$.
			Case 3: Errors with Factor structure ($\rho_{F}=0.9, \rho_{\lambda}=0.3$).}\label{error factor structure}
		\begin{tabular}{lllrrrrrrrrrr}
			\toprule
			&& & \multicolumn{4}{c}{$\widehat{V}_{\text{Hard}}$} & & $\widehat{V}_{HAC}$ &  $\widehat{V}_{DK}$ & $\widehat{V}_{CX}$ & $\widehat{V}_{CT}$ & $\widehat{V}_{W}$   \\
			\cline{4-7}
			N  &   T   &    L$\backslash$ M   &0.10       & 0.15      & 0.20      &   0.25    &&&&& &\\
			\midrule
			50	&	100	&	3	&	.081	&	.080	&	.078	&	.081	&&	.129	&	.078	&	.102	&	.130	&	.202	\\
			&		&	7	&	.091	&	.084	&	.079	&	.082	&&	.115	&	.093	&	.102	&	.130	&	.202	\\
			&		&	11	&	.103	&	.086	&	.094	&	.086	&&	.115	&	.108	&	.102	&	.130	&	.202	\\
			\noalign{\vskip 2mm}
			50	&	200	&	3	&	.083	&	.082	&	.081	&	.081	&&	.117	&	.080	&	.095	&	.132	&	.183	\\
			&		&	7	&	.066	&	.066	&	.065	&	.073	&&	.107	&	.065	&	.095	&	.132	&	.183	\\
			&		&	11	&	.072	&	.076	&	.080	&	.084	&&	.104	&	.072	&	.095	&	.132	&	.183	\\			
			\noalign{\vskip 2mm}
			200	&	100	&	3	&	.076	&	.074	&	.076	&	.074	&&	.109	&	.076	&	.086	&	.121	&	.167	\\
			&		&	7	&	.077	&	.080	&	.077	&	.074	&&	.105	&	.083	&	.086	&	.121	&	.167	\\
			&		&	11	&	.081	&	.076	&	.084	&	.094	&&	.103	&	.096	&	.086	&	.121	&	.167	\\			
			\noalign{\vskip 2mm}
			200	&	200	&	3	&	.072	&	.072	&	.073	&	.069	&&	.115	&	.071	&	.090	&	.126	&	.184	\\
			&		&	7	&	.070	&	.067	&	.071	&	.074	&&	.106	&	.068	&	.090	&	.126	&	.184	\\
			&		&	11	&	.074	&	.073	&	.073	&	.075	&&	.104	&	.072	&	.090	&	.126	&	.184	\\		
			\bottomrule
		\end{tabular}
	\end{table}

	\begin{table}[h!tbp]
		\centering	
		\caption{Empirical application: effects of divorce law refrom with state and year fixed effects: US state level data annual from 1956 to 1988, dependent variable is divorce rate per 1000 persons per year. OLS estimates and standard errors (using state population weights).}\label{div}
		\begin{tabular}{p{2.3cm}p{.9cm}p{.9cm}p{.9cm}p{.9cm}p{.9cm}p{.9cm}p{.9cm}p{.9cm}}
			\toprule
			Effects: & $\hat{\beta}_{OLS}$ &$ se_{OLS}$ & $se_{W}$ & $se_{CX}$ & $se_{CT}$ & $se_{HAC}$& $se_{DK}$ &  $se_{Hard} $  \\
			\midrule
			\multicolumn{9}{c}{Panel A: Without state-specific linear time trends} \\																						
			\noalign{\vskip 2mm}																						
			1\textendash2 years	&	.256	&	.086*	&	.140		&	.189		&	.139		&	.172	&	.155	&	.148	\\		
			﻿3\textendash4 years	&	.209	&	.086*	&	.081*	&	.159		&	.075*	&	.114	&	.104*	&	.089*	\\				
			﻿5\textendash6 years	&	.126	&	.086		&	.073		&	.168		&	.064*	&	.105	&	.088	&	.069	\\	
			﻿7\textendash8 years	&	.105	&	.086		&	.070		&	.165		&	.059		&	.100	&	.065	&	.040*	\\
			﻿9\textendash10 years	&	-.122	&	.085		&	.060*	&	.161		&	.041*	&	.088	&	.058*	&	.054*	\\		
			﻿11\textendash12 years	&	-.344	&	.085*	&	.071*	&	.173*	&	.043*	&	.101*	&	.056*	&	.075*	\\				
			﻿13\textendash14 years	&	-.496	&	.085*	&	.074*	&	.188*	&	.050*	&	.110*	&	.054*	&	.062*	\\				
			﻿15+ years	&	-.508	&	.081*	&	.089*	&	.223*	&	.048*	&	.139*	&	.061*	&	.077*	\\				
			\midrule
			\multicolumn{9}{c}{Panel B: With state-specific linear time trends} \\																						
			\noalign{\vskip 2mm}																						
			1\textendash2 years	&	.286	&	.064*		&	.152		&	.206		&	.143*	&	.185	&	.145*	&	.140*	\\	
			﻿3\textendash4 years	&	.254	&	.071*		&	.099*		&	.171		&	.102*	&	.140	&	.134	&	.126*	\\	
			﻿5\textendash6 years	&	.186	&	.079*		&	.102		&	.206		&	.110		&	.145	&	.148	&	.143	\\
			﻿7\textendash8 years&	.177	&	.086*		&	.109		&	.230		&	.120		&	.153	&	.155	&	.146	\\
			﻿9\textendash10 years	&	-.037	&	.093		&	.111		&	.241		&	.120		&	.156	&	.164	&	.154	\\
			﻿11\textendash12 years	&	-.247	&	.100*		&	.128		&	.268		&	.141		& .179	&	.196	&	.183	\\	
			﻿13\textendash14 years	&	-.386	&	.108*		&	.137*		&	.296		&	.164*	& .193*	&	.218	&	.209	\\		
			﻿15+ years	&	-.414	&	.120*		&	.158*		&	.337		&	.186*	&	.221	&	.251	&	.243	\\	
			\bottomrule
		\end{tabular}\\
		\begin{tablenotes}
		\item \textbf{Note:} Standard errors with asterisks indicate significance at 5\% level using $N(0,1)$ critical values; $se_{OLS}$ and $se_{W}$ refer to OLS and White standard errors respectively; $se_{CX}$ and $se_{CT}$  are clustered standard errors suggested by \cite{arellano1987}; $se_{HAC}$ and $se_{DK}$ are two types of Newey-West HAC estimator as explained in the text; $se_{Hard}$ is our standard error. Bartlett kernel with lag length $L=3$ is used for $se_{HAC}$, $se_{DK}$ and $se_{Hard}$. The threshold value for $se_{Hard}$ by the cross-validation is  $M=0.2$ (for the first panel)  and $M=0.1$ (for the second panel). 
	\end{tablenotes}
			\end{table}

	\clearpage
	\newpage

	\appendix
	\section{Appendix} \label{app}
	
	Throughout the proof, $max_{i}$, $max_{t}$, $max_{h}$, $max_{ij}$, $max_{it}$, $\sum_{i}$, $\sum_{t}$, and $\sum_{ij}$ denote $max_{i\leq N}$, $max_{t\leq T}$, $max_{h\leq L}$, $max_{i,j}$, $max_{i, t}$, $\sum_{i=1}^{N}$, $\sum_{t=1}^{T}$, and $\sum_{i=1}^{N}\sum_{j=1}^{N}$ respectively.\\
	\subsection{Proof of Theorem 2.1}
	First let
	\begin{equation*}
	V_{L} = \cfrac{1}{NT}\sum\limits_{t}Ex_{t}'u_{t}u_{t}'x_{t} +  \cfrac{1}{NT}\sum\limits_{h=1}^{L}\omega(h,L)\sum\limits_{t=h+1}^{T}[Ex_{t}'u_{t}u_{t-h}'x_{t-h}+Ex_{t-h}'u_{t-h}u_{t}'x_{t}].
	\end{equation*}
	We need following lemmas to prove the main results. \\
	\begin{lem} \label{lem1}
	(i) $ \|V - V_{L}\| \leq C\sum\limits_{h=L}^{T-1}\alpha_{NT}(h) + C\sum\limits_{h=1}^{L}(1-\omega(h,L))\alpha_{NT}(h)$.\\	
	(ii) $\max_i|V_{u,ii} - \var(\frac{1}{\sqrt{T}}\sum_{t=1}^Tx_{it}u_{it})|=o(1)$.\\
	(iii) $\min_i\lambda_{\min}(V_{u,ii})>c$.
	\end{lem}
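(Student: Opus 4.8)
The plan is to derive all three parts from a single reduction that converts the expectation of each quadratic-form block into the autocovariance measures $\alpha_{NT}(h)$ and $\rho_{ij,h}$. Once (i) is established at the aggregate level, part (ii) is its exact single-index analogue, and part (iii) follows from (ii) by eigenvalue perturbation.

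For (i), I would first split $V-V_L$ according to which lags are affected,
$$V-V_L=\frac{1}{NT}\sum_{h=L+1}^{T-1}\sum_{t=h+1}^{T}B_{h,t}+\frac{1}{NT}\sum_{h=1}^{L}(1-\omega(h,L))\sum_{t=h+1}^{T}B_{h,t},$$
where $B_{h,t}=Ex_{t}'u_{t}u_{t-h}'x_{t-h}+Ex_{t-h}'u_{t-h}u_{t}'x_{t}$, so that the first sum collects the lags truncated away ($h>L$) and the second the lags whose Bartlett weight falls short of one. The crux is a per-block bound. Conditioning on $X$ and using sub-multiplicativity of the operator norm gives $\|Ex_{t}'u_{t}u_{t-h}'x_{t-h}\|=\|E[x_{t}'E(u_{t}u_{t-h}'|X)x_{t-h}]\|\le E[\|x_{t}\|\,\|E(u_{t}u_{t-h}'|X)\|\,\|x_{t-h}\|]$, and likewise for the symmetric term, so that $\|B_{h,t}\|\le\alpha_{NT}(h)\,E[\|x_{t}\|\,\|x_{t-h}\|]$ by the very definition of $\alpha_{NT}(h)$ (which bundles both conditional-covariance norms and takes the $\sup$ over $X$). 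By Cauchy--Schwarz and $E\|x_{t}\|^2\le E\|x_{t}\|_F^2=O(N)$ (finite second moments from the exponential-tail Assumption \ref{assumption3}(ii)), this is $O(N\alpha_{NT}(h))$. Summing the $(T-h)$ time indices against the $1/(NT)$ normalization cancels the factor $N$ and leaves $\le C\alpha_{NT}(h)$ per lag; summing over the two lag ranges yields exactly the two terms on the right of (i).

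For (ii), I would note that under $E(u_{t}|x_{t})=0$ (Assumption \ref{assumption1}(i)) the score $x_{it}u_{it}$ is mean zero, so $\var(\frac{1}{\sqrt{T}}\sum_{t}x_{it}u_{it})$ is precisely the untruncated, full-weight long-run variance $V_{ii}$, while $V_{u,ii}$ is its Bartlett-truncated counterpart. Hence $V_{u,ii}-\var(\cdot)$ has the same two-part form as in (i) at the single index $i$, with $\alpha_{NT}(h)$ replaced by $\rho_{ii,h}$; each block is now the scalar $E(u_{it}u_{i,t-h}|X)$ times $x_{it}x_{i,t-h}'$, and $E\|x_{it}\|^2=O(1)$ replaces the $O(N)$ above. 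The decisive uniformity comes from observing that a diagonal entry is dominated by the operator norm, so $\rho_{ii,h}\le\alpha_{NT}(h)$ for every $i$; therefore $\max_i\|V_{u,ii}-\var(\cdot)\|\le C\sum_{h\ge L}\alpha_{NT}(h)+C\sum_{h=1}^{L}(1-\omega(h,L))\alpha_{NT}(h)$. The first term is $o(1)$ from summability $\sum_h\alpha_{NT}(h)\le C$ (Assumption \ref{assumption2}(i)) together with $L\to\infty$, and the second is $o(1)$ by dominated convergence, since $\omega(h,L)\to1$ for each fixed $h$ with $\max_h|\omega|\le C$ (Assumption \ref{assumption3}(i)) and the summand is dominated by the summable $C\alpha_{NT}(h)$.

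For (iii), I would combine (ii) with Assumption \ref{assumption3}(iii). Weyl's inequality gives $|\lambda_{\min}(V_{u,ii})-\lambda_{\min}(\var(\tfrac{1}{\sqrt{T}}\sum_{t}x_{it}u_{it}))|\le\|V_{u,ii}-\var(\cdot)\|$, and (ii) makes the right-hand side $o(1)$ uniformly in $i$; since $\lambda_{\min}(\var(\cdot))>c_1$ for all $i$, for $N,T$ large enough $\min_i\lambda_{\min}(V_{u,ii})\ge c_1-o(1)\ge c_1/2=:c$. The main obstacle is the per-block estimate underpinning (i): the conditioning-on-$X$ reduction and operator-norm sub-multiplicativity, and in particular verifying $E\|x_{t}\|^2=O(N)$ so that the $1/(NT)$ normalization exactly absorbs the cross-sectional dimension and leaves a clean $C\alpha_{NT}(h)$. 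The one remaining delicacy is the uniform-in-$i$ vanishing in (ii), which rests entirely on the domination $\rho_{ii,h}\le\alpha_{NT}(h)$ that transfers the aggregate summability to every individual (and on the tail $\sum_{h\ge L}\alpha_{NT}(h)$ being uniformly small, the single point where I would take care).
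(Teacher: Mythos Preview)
Your proposal is correct and follows essentially the same approach as the paper: the same two-part lag decomposition and the same per-block bound via conditioning on $X$, operator-norm sub-multiplicativity, and $E\|x_t\|^2=O(N)$ for (i); the single-index analogue for (ii); and the eigenvalue perturbation from (ii) plus Assumption \ref{assumption3}(iii) for (iii). Your write-up is in fact more explicit than the paper's (which simply says (ii) ``follows from the same argument'' and (iii) ``follows from (ii) and the assumption''), in particular your observation that $\rho_{ii,h}\le\alpha_{NT}(h)$ to obtain uniformity in $i$, and your invocation of dominated convergence for the Bartlett-weight term, spell out steps the paper leaves implicit.
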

	\begin{proof}
	(i)	First note that
		\begin{align*}
		\|Ex_{t}'u_{t}u_{t-h}'x_{t-h}+Ex_{t-h}'u_{t-h}u_{t}'x_{t}\| &\leq E\|x_{t}\|\|E(u_{t}u_{t-h}'|X)\|\|x_{t-h}\|+E\|x_{t-h}\|\|E(u_{t-h}u_{t}'|X)\|\|x_{t}\|\\
		& \leq \alpha_{NT}(h)E\|x_{t}\|\|x_{t-h}\|  \leq CN\alpha_{NT}(h).
		\end{align*}
		Hence for some $C,c>0,$
		\begin{align*}
		\|V - V_{L}\| & \leq \|\cfrac{1}{NT}\sum\limits_{h=L+1}^{T-1}\sum\limits_{t=h+1}^{T}[Ex_{t}'u_{t}u_{t-h}'x_{t-h}+Ex_{t-h}'u_{t-h}u_{t}'x_{t}]\| \\
		&  \;\; + \|\cfrac{1}{NT}\sum\limits_{h=1}^{L}(1-\omega(h,L))\sum\limits_{t=h+1}^{T}[Ex_{t}'u_{t}u_{t-h}'x_{t-h} + Ex_{t-h}'u_{t-h}u_{t}'x_{t}]\|\\
		& \leq C\frac{1}{T}\sum\limits_{h=L+1}^{T-1}\sum\limits_{t=h+1}^{T}\alpha_{NT}(h^{c}) + C\frac{1}{T}\sum\limits_{h=1}^{L}(1-\omega(h,L))\sum\limits_{t=h+1}^{T}\alpha_{NT}(h^{c}) \\
		&\leq C\sum\limits_{h>L}\alpha_{NT}(h^{c}) + C\sum\limits_{h=1}^{L} |1-\omega(h,L)|\alpha_{NT}(h^{c}) = o (1).
		\end{align*}
		The second term of the last equation goes to zero due to Assumption \ref{assumption2}(iii) and the dominated convergence theorem, noting that $|1-\omega(h,L)|\alpha_{NT}(h^{c})\leq C\alpha_{NT}(h^{c})$ and $\alpha_{NT}(h^{c})$ is summable over $h$.
		
	(ii)	The proof for $\max_i|V_{u,ii} - \var(\frac{1}{\sqrt{T}}\sum_{t=1}^Tx_{it}u_{it})|=o(1)$ follows from the same argument.
		
		(iii) The result follows from (ii) and the assumption that $\min_i\lambda_{\min}( \var(\frac{1}{\sqrt{T}}\sum_{t=1}^Tx_{it}u_{it}))>c$.
		
	\end{proof}
	
	\begin{lem} \label{lem2}
		Suppose $\log N = o(T).$ For $f(t,h,L) = \omega(h,L)1\{t>h\}$,
		\begin{equation*}
		\max_{h}\max_{i,j}\|\frac{1}{T}\sum_{t=1}^{T}x_{it}u_{it}u_{j,t-h}x_{j,t-h}'f(t,h,L)-Ex_{it}u_{it}u_{j,t-h}x_{j,t-h}'f(t,h,L)\| = O_{P}(\sqrt{\frac{\log(LN)}{T}}).
		\end{equation*}
	\end{lem}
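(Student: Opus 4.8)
\emph{Plan.} Because $k$ is fixed, the operator-norm claim reduces to an entrywise one: for coordinates $a,b\le k$ set $Z_t^{(ijh,ab)}\equiv x_{it,a}\,u_{it}\,u_{j,t-h}\,x_{j,t-h,b}\,f(t,h,L)$, and use $\|A\|\le k\max_{a,b}|A_{ab}|$ for any $k\times k$ matrix $A$; thus it suffices to establish
\[
\max_{h\le L}\ \max_{i,j\le N}\ \max_{a,b\le k}\Big|\tfrac1T\sum_{t=1}^{T}\big(Z_t^{(ijh,ab)}-EZ_t^{(ijh,ab)}\big)\Big|=O_{P}\!\Big(\sqrt{\tfrac{\log(LN)}{T}}\Big).
\]
Fix one tuple $(i,j,h,a,b)$ and drop it from the notation. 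Two structural facts are needed. First, $\{Z_t\}_t$ is a bounded deterministic multiple ($|f(t,h,L)|\le C$ by Assumption \ref{assumption3}(i)) of a measurable function of the block $(x_t,u_t,x_{t-h},u_{t-h})$; hence by Assumption \ref{assumption2}(i) it is $\alpha$-mixing with the coefficient at lag $m>h$ bounded by $\exp(-C(m-h)^{\kappa})$, i.e.\ still exponentially mixing, uniformly over the tuple since $h\le L$ and $L$ grows only polynomially in $T$. Second, by the exponential-tail Assumption \ref{assumption3}(ii), each of $|x_{it,a}|,|x_{j,t-h,b}|$ is sub-Weibull with exponent $r_2$ and each of $|u_{it}|,|u_{j,t-h}|$ is sub-Weibull with exponent $r_1$, so $Z_t$ is sub-Weibull with exponent $\gamma$ where $\gamma^{-1}=2r_1^{-1}+2r_2^{-1}$, uniformly over the tuple; together with $\sum_h\alpha_{NT}(h)<\infty$ this also gives $\mathrm{Var}(\sum_{t=1}^{T}Z_t)=O(T)$ uniformly.

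The plan is a truncation-plus-Bernstein argument followed by a union bound. Truncate at $\tau_T=(C_0\log(NT))^{1/\gamma}$: by the sub-Weibull tail and a union bound over $t\le T,\ i,j\le N,\ h\le L$, the event $\{\max|Z_t|\le\tau_T\}$ has probability tending to one for $C_0$ large, and the induced bias $\tfrac1T\sum_t|E(Z_t\mathbf 1\{|Z_t|>\tau_T\})|$ is super-polynomially small, hence negligible against $\sqrt{\log(LN)/T}$. For the truncated (now bounded) sequence one invokes a Bernstein-type inequality for exponentially $\alpha$-mixing sequences (as used in \cite{fan2013large}): with $s=\bar M\sqrt{\log(LN)/T}$,
\[
P\!\Big(\Big|\tfrac1T\sum_{t=1}^{T}(Z_t-EZ_t)\Big|>s\Big)\ \le\ \exp\!\big(-c\bar M^{2}\log(LN)\big)+(\text{lower-order terms}),
\]
where the first ``Gaussian'' term binds because $\tau_T\,s=o(1)$ under $\log N=o(T)$, and the remaining heavy-tail/block terms are $O\big(T\exp(-c(T\log(LN))^{\gamma/2})\big)$. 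The reason this form of the inequality is available is that the combined exponent $\tilde\gamma$ with $\tilde\gamma^{-1}=2r_1^{-1}+2r_2^{-1}+\kappa^{-1}$ satisfies $\tilde\gamma<1$, since $\tilde\gamma^{-1}\ge r_1^{-1}+r_2^{-1}+\kappa^{-1}>1$ by Assumption \ref{assumption3}(ii).

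A union bound over the $O(LN^{2})$ tuples then finishes: as $\log(LN^{2})\asymp\log(LN)$, choosing $\bar M$ large makes $LN^{2}\exp(-c\bar M^{2}\log(LN))\to0$ (and the remaining terms vanish even faster), giving the stated $O_{P}(\sqrt{\log(LN)/T})$ rate, after which $\|A\|\le k\max_{a,b}|A_{ab}|$ recovers the operator-norm statement. I expect the main obstacle to be the uniform concentration step: guaranteeing the Bernstein constants do not degrade across $i,j,h$ (which rests on the uniformity built into Assumptions \ref{assumption2}--\ref{assumption3}), calibrating $\tau_T$ so that the truncation-tail union bound and the post-truncation ``Gaussian regime'' $\tau_T s=o(1)$ hold simultaneously (this is where $\log N=o(T)$ and the exponential tails enter), and absorbing the lag shift $h\le L$ in the mixing coefficients (harmless because the mixing decays exponentially and $L$ grows slowly).
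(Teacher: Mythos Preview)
Your proposal is correct and follows the same overall route as the paper: reduce to scalar entries, apply a Bernstein-type exponential inequality to each centered average, and take a union bound over the $O(LN^{2})$ tuples so that the Gaussian term $\exp(-c\bar M^{2}\log(LN))$ kills the multiplicity. The paper's own proof is in fact much terser than yours---it simply writes $P(\max_{h,i,j}|\cdot|>c\alpha_n)\le LN^{2}\exp(-Tc^{2}\alpha_n^{2}/C)$ after assuming ``for convenience\ldots\ there is no serial correlation,'' whereas you supply the missing details (sub-Weibull tail of the product, truncation at $\tau_T$, the mixing Bernstein inequality of \cite{fan2013large}, and the check that the lag shift $h\le L$ does not spoil the exponential mixing rate); so your version is more complete but not a different argument.
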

	\begin{proof}
		The left hand side can be written as\\~
		$\max_{h}\max_{ij}\|\frac{1}{T}\sum_{t}Z_{h,ij,t}\|$, where $Z_{h,ij,t} = f(t,h,L)(x_{it}\varepsilon_{it}\varepsilon_{j,t-h}x_{j,t-h}' - Ex_{it}\varepsilon_{it}\varepsilon_{j,t-h}x_{j,t-h}' ). $\\~
		For convenience, assume that $\dim(Z_{h,ij,t}) = 1$ and there is no serial correlation.  Set $\alpha_{n} = \sqrt{\frac{\log(LN)}{T}}$ and $c^2=2C$ for $c, C>0$. Then, by using Bernstein Inequality and exponential tail conditions, and that $f(t,h,L)$ is bounded,
		\begin{align*}
		P(\max\limits_{h \leq L}\max\limits_{ij}|\frac{1}{T}\sum_{t=1}^{T}Z_{h,ij,t}| > c\alpha_{n}) &\leq LN^2\max\limits_{h \leq L}\max\limits_{ij}P(|\cfrac{1}{T}\sum\limits_{t=1}^{T}Z_{h,ij,t}| > c\alpha_{n})\\
		& \leq LN^2exp(-\cfrac{Tc^2\alpha_{n}^2}{C})\\
		& \leq exp(\log(LN)-\cfrac{Tc^2\alpha_{n}^2}{C})\\
		& = exp(-\log(LN))\\
		&= \cfrac{1}{LN} \rightarrow 0. \;\;
		\end{align*}
	\end{proof}
	
	\begin{lem} \label{lem3}
		Suppose $\log N = o(T).$ For $f(t,h,L) = \omega(h,L)1\{t>h\}$,
		\begin{equation*}
		 \max_{h}\max_{i,j}\|\frac{1}{T}\sum_{t=1}^{T}x_{it}\widehat{u}_{it}\widehat{u}_{j,t-h}x_{j,t-h}'f(t,h,L)-x_{it}u_{it}u_{j,t-h}x_{j,t-h}'f(t,h,L)\| =O_{P}(\frac{1}{T}\sqrt{\frac{\log(LN)}{N}}).
		\end{equation*}
	\end{lem}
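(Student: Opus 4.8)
The plan is to plug the OLS residuals $\widehat u_{it}=u_{it}-x_{it}'\delta$ with $\delta\equiv\widehat\beta-\beta$ into the difference and expand the product. Since
\begin{equation*}
\widehat u_{it}\widehat u_{j,t-h}-u_{it}u_{j,t-h}=-(x_{it}'\delta)u_{j,t-h}-u_{it}(x_{j,t-h}'\delta)+(x_{it}'\delta)(x_{j,t-h}'\delta),
\end{equation*}
the quantity inside the norm in Lemma \ref{lem3} equals $A_{ij,h}+B_{ij,h}+C_{ij,h}$ where
\begin{align*}
A_{ij,h}&=-\frac1T\sum_t f(t,h,L)\,x_{it}(x_{it}'\delta)u_{j,t-h}x_{j,t-h}',\\
B_{ij,h}&=-\frac1T\sum_t f(t,h,L)\,x_{it}u_{it}(x_{j,t-h}'\delta)x_{j,t-h}',\\
C_{ij,h}&=\frac1T\sum_t f(t,h,L)\,x_{it}(x_{it}'\delta)(x_{j,t-h}'\delta)x_{j,t-h}'.
\end{align*}
I would first record $\|\delta\|=O_P((NT)^{-1/2})$, which follows from Assumptions \ref{assumption1}--\ref{assumption3}: $(\frac1{NT}\sum_{it}x_{it}x_{it}')^{-1}=O_P(1)$ and $\|\frac1{NT}\sum_{it}x_{it}u_{it}\|=O_P((NT)^{-1/2})$ because $\var(\frac1{\sqrt{NT}}\sum_{it}x_{it}u_{it})=V$ is bounded. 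It then remains to bound $\max_h\max_{ij}\|A_{ij,h}\|$, $\max_h\max_{ij}\|B_{ij,h}\|$, and $\max_h\max_{ij}\|C_{ij,h}\|$ separately.

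For $A_{ij,h}$ the key observation is that it is a fixed linear functional of $\delta$: its $(k_1,k_2)$ entry is $-\sum_{k_3}\delta_{k_3}\,g_{ij,h,k_1k_2k_3}$ with $g_{ij,h,k_1k_2k_3}=\frac1T\sum_t f(t,h,L)x_{it,k_1}x_{it,k_3}u_{j,t-h}x_{j,t-h,k_2}$. Conditioning on $X$ and using $E(u_{j,t-h}\mid X)=0$ (the proof maintains strict exogeneity, consistent with the conditional-on-$X$ formulation of $\alpha_{NT}$ and $\rho_{ij,h}$), each $g_{ij,h,k_1k_2k_3}$ is a centered average of an $\alpha$-mixing array with exponential tails, so the Bernstein-plus-union-bound argument used in the proof of Lemma \ref{lem2} — over the $O(LN^2)$ triples $(i,j,h)$, using $\log N=o(T)$, and invoking Assumptions \ref{assumption2}(i) and \ref{assumption3}(ii) — gives $\max_h\max_{ij}|g_{ij,h,k_1k_2k_3}|=O_P(\sqrt{\log(LN)/T})$. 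Since $k$ is fixed, the operator norm of $A_{ij,h}$ is controlled by its entries, so $\max_h\max_{ij}\|A_{ij,h}\|\le\|\delta\|\cdot O_P(\sqrt{\log(LN)/T})=O_P((NT)^{-1/2}\sqrt{\log(LN)/T})=O_P(T^{-1}\sqrt{\log(LN)/N})$, the target rate. The term $B_{ij,h}$ is handled identically with the roles reversed: pull $\delta$ out of $x_{j,t-h}'\delta$ and use $E(u_{it}\mid X)=0$ to center $\frac1T\sum_t f(t,h,L)x_{it,k_1}u_{it}x_{j,t-h,k_2}x_{j,t-h,k_3}$.

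For $C_{ij,h}$ no error factor remains, so a crude bound suffices: $\|C_{ij,h}\|\le\|\delta\|^2\cdot\frac1T\sum_t|f(t,h,L)|\,\|x_{it}\|^2\|x_{j,t-h}\|^2$, and the exponential tail of $x_{it}$ (Assumption \ref{assumption3}(ii)) together with a union bound gives $\max_h\max_{ij}\frac1T\sum_t|f(t,h,L)|\|x_{it}\|^2\|x_{j,t-h}\|^2=O_P(1)$; hence $\max_h\max_{ij}\|C_{ij,h}\|=O_P((NT)^{-1})=o_P(T^{-1}\sqrt{\log(LN)/N})$ because $N^{-1/2}=o(\sqrt{\log(LN)})$. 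Collecting the three bounds proves the lemma. The main obstacle is the $A$/$B$ step: one must avoid the naive bound $\|A_{ij,h}\|\le\|\delta\|\cdot\frac1T\sum_t|f|\|x_{it}\|^2|u_{j,t-h}|\|x_{j,t-h}\|=O_P((NT)^{-1/2})$, which is too crude by a factor $\sqrt{T/\log(LN)}$, and instead exploit the conditional mean-zero structure so that the $u$-carrying average contributes the extra $\sqrt{\log(LN)/T}$ through the uniform Bernstein inequality rather than a mere law of large numbers.
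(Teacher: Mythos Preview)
Your proposal is correct and follows essentially the same route as the paper: the decomposition $A_{ij,h}+B_{ij,h}+C_{ij,h}$ is exactly the paper's $a_3+a_2+a_1$, the quadratic-in-$\delta$ piece is bounded crudely by $\|\delta\|^2\cdot O_P(1)=O_P((NT)^{-1})$, and the two cross terms are handled by pulling $\delta$ out and applying the same Bernstein-plus-union-bound argument as in Lemma~\ref{lem2} to the $u$-carrying average so as to gain the factor $\sqrt{\log(LN)/T}$. Your explicit remark that the conditional mean-zero structure $E(u\mid X)=0$ is what makes the Bernstein step work (rather than a naive moment bound) is precisely the point, and matches the paper's use of it.
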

	\begin{proof}
		The left hand side is bounded by $a_1+a_2+a_3$, where
		\begin{align*}
		&a_1 = \max_{h}\max_{i,j}\|\frac{1}{T}\sum_{t=1}^{T}x_{it}(\widehat{u}_{it}-u_{it})(\widehat{u}_{j,t-h}-u_{j,t-h})x_{j,t-h}'f(t,h,L)\|\\
		&a_2 = \max_{h}\max_{i,j}\|\frac{1}{T}\sum_{t=1}^{T}x_{it}u_{it}(\widehat{u}_{j,t-h}-u_{j,t-h})x_{j,t-h}'f(t,h,L)\| \\
		&a_3 = \max_{h}\max_{i,j}\|\frac{1}{T}\sum_{t=1}^{T}x_{it}(\widehat{u}_{it}-u_{it})u_{j,t-h}x_{j,t-h}'f(t,h,L)\|.
		\end{align*}
		For simplicity, let's assume $\dim(x_{it}) =1$. Then
		\begin{align*}
		a_1 &\leq \|\widehat{\beta}-\beta\|^2\max_{h}\max_{i,j}\|\frac{1}{T}\sum_{t=1}^{T}x_{it}x_{it}x_{j,t-h}x_{j,t-h}f(t,h,L)\|\\
		&\leq O_{P}(\frac{1}{NT})\max_{ij}\frac{1}{T}\sum_{t=1}^{T}\|x_{it}\|^4 = O_{P}(\frac{1}{NT}).
		\end{align*}
		By using Bernstein Inequality for weakiy dependent data and exponential tail conditions, and that $f(t,h,L)$ is bounded,
		\begin{align*}
		a_2 &\leq \|\widehat{\beta}-\beta\|\max_{h}\max_{i,j}\|\frac{1}{T}\sum_{t=1}^{T}x_{it}u_{it}x_{j,t-h}x_{j,t-h}f(t,h,L)\| \\
		&\leq O_{P}(\frac{1}{\sqrt{NT}})O_{P}(\sqrt{\frac{\log(LN)}{T}}) \\
		&= O_{P}(\frac{1}{T}\sqrt{\frac{\log(LN)}{N}}).
		\end{align*}
		$a_{3}$ is bounded using the same argument. Together,
		\begin{equation*}
		 \max_{h}\max_{i,j}\|\frac{1}{T}\sum_{t=1}^{T}x_{it}\widehat{u}_{it}\widehat{u}_{j,t-h}x_{j,t-h}'f(t,h,L)-x_{it}u_{it}u_{j,t-h}x_{j,t-h}'f(t,h,L)\| =O_{P}(\frac{1}{T}\sqrt{\frac{\log(LN)}{N}}).
		\end{equation*}
	\end{proof}
	\noindent
	\begin{proof}[\textbf{Proof of Theorem \ref{asdistribution}.} ]
		It suffice to prove $\|\widehat{V}-V\| = o_{P}(1)$. By Lemma \ref{lem1}, we have
		\begin{align*}
		\|\widehat{V}-V\|  \leq \|\widehat{V}-V_{L}\| + C\sum\limits_{h>L}\alpha_{NT}(h)+C\sum\limits_{h=1}^{L}(1-\omega(h,L))\alpha_{NT}(h).
		\end{align*}
		The remaining proof is that of $\|\widehat{V}-V_{L}\| = o_{P}(1)$, given below. \\~
		\\~
		\textbf{Main proof of the convergence of $\|\widehat{V}-V_{L}\|$}\\~
		Note that $V_{L} = \frac{1}{N}\sum_{ij}V_{u,ij}$, $\widehat{V} = \frac{1}{N}\sum_{ij}\widehat{S}_{u,ij}$. Hence
		\begin{align*}
		\|\widehat{V}-V_{L}\| \leq \cfrac{1}{N}\sum\limits_{\widehat{S}_{u,ij}=0}\|V_{u,ij}-\widehat{S}_{u,ij}\| + \cfrac{1}{N}\sum\limits_{\widehat{S}_{u,ij} \neq 0}\|V_{u,ij}-\widehat{S}_{u,ij}\|.
		\end{align*}
		Note that $\|S_{u,ij}-V_{u,ij}\| < \frac{1}{2}\lambda_{ij}$ for $\forall (i, j)$ and $C_1>0$
		\begin{align*}
		\|S_{u,ii}\| &\geq \|V_{u,ii}\| - \|S_{u,ii}-V_{u,ii}\|\\
		& \geq \|V_{u,ii}\| - \max\limits_{ij}\|S_{u,ii}-V_{u,ii}\|\\
		& \geq \|V_{u,ii}\| - C\omega_{NT} > C_1.
		\end{align*}
		From Assumption \ref{assumption3}, $\|V_{u,ii}\| > c_{1} > 0$, then, $\lambda_{ij} = M\omega_{NT}\sqrt{\|S_{u,ii}\|\|S_{u,jj}\|} > c_{1}M\omega_{NT} > 2c_{1}\omega_{NT}$. Then, $\frac{\lambda_{ij}}{2} > c\omega_{NT} \geq \max\limits_{ij}\|S_{u,ij}-V_{u,ij}\|$.  Therefore, $\|S_{u,ij}-V_{u,ij}\| < \frac{1}{2}\lambda_{ij}$ for $\forall (i, j)$\\~
		Recall $\rho_{ij,h} = \sup_{X}\max_{t}|E(u_{it}u_{j,t-h}|X)| + |E(u_{i,t-h}u_{jt}|X)|$. Then,
		\begin{align*}
		\|V_{u,ij}\| & \leq \|\cfrac{1}{T}\sum\limits_{t}Ex_{it}u_{it}u_{jt}x_{jt}'+\cfrac{1}{T}\sum\limits_{h=1}^{L}\omega(h,L)\sum\limits_{t=h+1}^{T}[Ex_{it}u_{it}u_{j,t-h}x_{j,t-h}' + Ex_{i,t-h}u_{i,t-h}u_{jt}x_{jt}']\|\\
		&\leq C\rho_{ij,0}/2 + C\cfrac{1}{T}\sum\limits_{h=1}^{L}\omega(h,L)\sum\limits_{t=h+1}^{T}\rho_{ij,h} \leq C\sum\limits_{h=0}^{L}\rho_{ij,h}.
		\end{align*}
		Hence, on the event $\max_{ij}\|S_{u,ij}-V_{u,ij}\| \leq C\omega_{NT}$,
		\begin{align*}
		\cfrac{1}{N}\sum\limits_{\widehat{S}_{u,ij}=0}\|V_{u,ij}-\widehat{S}_{u,ij}\| & \leq \cfrac{1}{N}\sum\limits_{\widehat{S}_{u,ij}=0}\|V_{u,ij}\| \leq \cfrac{1}{N}\sum\limits_{ij}\|V_{u,ij}\|1\{\|S_{u,ij}\|< \lambda_{ij}\}\\
		&= \cfrac{1}{N}\sum\limits_{ij}\|V_{u,ij}\|1\{\|V_{u,ij}\|< \|S_{u,ij}\| + \|S_{u,ij}-V_{u,ij}\|, \|S_{u,ij}\|< \lambda_{ij}\}\\
		&\leq \cfrac{1}{N}\sum\limits_{ij}\|V_{u,ij}\|\cfrac{(1.5\lambda_{ij})^{1-q}}{\|V_{u,ij}\|^{1-q}}1\{\|V_{u,ij}\|< 1.5\lambda_{ij}\}\\
		&\leq \cfrac{1}{N}\sum\limits_{ij}\|V_{u,ij}\|^{q}(1.5\lambda_{ij})^{1-q} \leq C\omega_{NT}^{1-q}\cfrac{1}{N}\sum\limits_{ij}\|V_{u,ij}\|^{q}\\
		&\leq C\omega_{NT}^{1-q}\max\limits_{i}\sum\limits_{j}(\sum\limits_{h=0}^{L}\rho_{ij,h})^{q}.
		\end{align*}
		On the other hand, on the event $\max_{ij}\|S_{u,ij}-V_{u,ij}\| \leq C\omega_{NT}$,
		\begin{align*}
		\cfrac{1}{N}\sum\limits_{\widehat{S}_{u,ij} \neq 0}\|V_{u,ij}-\widehat{S}_{u,ij}\| & \leq \cfrac{1}{N}\sum\limits_{\widehat{S}_{u,ij} \neq 0}\|V_{u,ij}-S_{u,ij}\| + \cfrac{1}{N}\sum\limits_{\widehat{S}_{u,ij} \neq 0}\|S_{u,ij}-\widehat{S}_{u,ij}\| \\
		&\leq \cfrac{1}{N}\sum\limits_{\widehat{S}_{u,ij} \neq 0}0.5\lambda_{ij} + \cfrac{1}{N}\sum\limits_{\widehat{S}_{u,ij} \neq 0}\lambda_{ij} \leq \cfrac{1}{N}\sum\limits_{ij}1.5\lambda_{ij}1\{\|S_{u,ij}\|> \lambda_{ij}\}\\
		&= \cfrac{1}{N}\sum\limits_{ij}1.5\lambda_{ij}1\{\|V_{u,ij}\|> \|S_{u,ij}\| - \|S_{u,ij}-V_{u,ij}\|,\|S_{u,ij}\|> \lambda_{ij}\}\\
		&\leq \cfrac{1}{N}\sum\limits_{ij}1.5\lambda_{ij}\cfrac{\|V_{u,ij}\|^{q}}{(0.5\lambda_{ij})^{q}}1\{\|V_{u,ij}\|>0.5\lambda_{ij}\}\\
		&\leq \cfrac{1}{N}\sum\limits_{ij} C\lambda_{ij}^{1-q}\|V_{u,ij}\|^{q} \leq \cfrac{1}{N}\sum\limits_{ij}\|V_{u,ij}\|^{q}C\omega_{NT}^{1-q} \\
		&\leq C\omega_{NT}^{1-q}\max\limits_{i}\sum\limits_{j}(\sum\limits_{h=0}^{L}\rho_{ij,h})^{q}.
		\end{align*}
		Hence $\|\widehat{V}-V_{L}\| \leq C\omega_{NT}^{1-q}\max_{i}\sum_{j}(\sum_{h=0}^{L}\rho_{ij,h})^{q}$. Therefore, we have
		\begin{align*}\|\widehat{V}-V\| \leq O_{P}(\omega_{NT}^{1-q}\max\limits_{i}\sum\limits_{j}(\sum\limits_{h=0}^{L}\rho_{ij,h})^{q}) + C\sum\limits_{h=L}^{T-1}\alpha_{NT}(h) + C\sum\limits_{h=1}^{L}(1-\omega(h,L))\alpha_{NT}(h). 
		\end{align*}
	\end{proof}
	\begin{proof}[\textbf{Remaining proofs :} $\max_{ij}\|S_{u,ij}-V_{u,ij}\| = O_{P}(\omega_{NT})$, where $\omega_{NT}=L\sqrt{\frac{\log(LN)}{T}}$]
		Recall\\~
		$S_{u,ij} \equiv \cfrac{1}{T}\sum\limits_{t}x_{it}\widehat{u}_{it}\widehat{u}_{jt}x_{jt}'+\cfrac{1}{T}\sum\limits_{h=1}^{L}\omega(h,L)\sum\limits_{t=h+1}^{T}[x_{it}\widehat{u}_{it}\widehat{u}_{j,t-h}x_{j,t-h}' + x_{i,t-h}\widehat{u}_{i,t-h}\widehat{u}_{jt}x_{jt}'],$\\
		$V_{u,ij} \equiv \cfrac{1}{T}\sum\limits_{t}Ex_{it}u_{it}u_{jt}x_{jt}'+\cfrac{1}{T}\sum\limits_{h=1}^{L}\omega(h,L)\sum\limits_{t=h+1}^{T}[Ex_{it}u_{it}u_{j,t-h}x_{j,t-h}' + Ex_{i,t-h}u_{i,t-h}u_{jt}x_{jt}']$.\\~
		Let\\~
		$M_{u,ij} \equiv \cfrac{1}{T}\sum\limits_{t}x_{it}u_{it}u_{jt}x_{jt}'+\cfrac{1}{T}\sum\limits_{h=1}^{L}\omega(h,L)\sum\limits_{t=h+1}^{T}[x_{it}u_{it}u_{j,t-h}x_{j,t-h}' + x_{i,t-h}u_{i,t-h}u_{jt}x_{jt}'].$\\~
		\\~
		We first bound $\max_{ij}\|M_{u,ij}-V_{u,ij}\|$, then bound $\max_{ij}\|S_{u,ij}-M_{u,ij}\| $.\\~
		\\~
		\textbf{Proof of $\max_{ij}\|M_{u,ij}-V_{u,ij}\| = O_{P}(L\sqrt{\frac{\log(LN)}{T}})$}\\~
		Given Lemma \ref{lem2}, we have
		\begin{align*}
		\max_{ij}\|M_{u,ij}-V_{u,ij}\| & \leq O_{P}(\sqrt{\frac{\log (LN)}{T}})\\~
		& +2\max_{ij}\|\frac{1}{T}\sum_{h=1}^{L}\sum_{t=1}^{T}[x_{it}u_{it}u_{j,t-h}x_{j,t-h}'f(t,h,L)-Ex_{it}u_{it}u_{j,t-h}x_{j,t-h}'f(t,h,L)]\|\\~
		&\leq O_{P}(\sqrt{\frac{\log (LN)}{T}}) + LO_{P}(\sqrt{\frac{\log(LN)}{T}}) = O_{P}(L\sqrt{\frac{\log(LN)}{T}}). 
		\end{align*}
		\\~
		\textbf{Prove of $\max_{ij}\|M_{u,ij}-S_{u,ij}\| = O_{P}(\frac{L}{T}\sqrt{\frac{\log(LN)}{N}})$}\\~
		Given Lemma \ref{lem3}, we have
		\begin{align*}
		\max_{ij}\|M_{u,ij}-S_{u,ij}\| & \leq O_{P}(\frac{1}{T}\sqrt{\frac{\log(LN)}{N}}) \\~
		& +2\max_{ij}\|\frac{1}{T}\sum_{h=1}^{L}\sum_{t=1}^{T}[x_{it}\widehat{u}_{it}\widehat{u}_{j,t-h}x_{j,t-h}'f(t,h,L)-x_{it}u_{it}u_{j,t-h}x_{j,t-h}'f(t,h,L))]\|\\~
		&\leq O_{P}(\frac{1}{T}\sqrt{\frac{\log(LN)}{N}})  + LO_{P}(\frac{1}{T}\sqrt{\frac{\log(LN)}{N}}) = O_{P}(\frac{L}{T}\sqrt{\frac{\log(LN)}{N}}).
		\end{align*}
		Together,
		\begin{align*}
		\max_{ij}\|V_{u,ij}-S_{u,ij}\| = O_{P}(L\sqrt{\frac{\log(LN)}{T}})+O_{P}(\frac{L}{T}\sqrt{\frac{\log(LN)}{N}}) = O_{P}(L\sqrt{\frac{\log(LN)}{T}}).
		\end{align*}
	\end{proof}

	\newpage
	\bibliographystyle{economet}
	\bibliography{reference}

\end{document}